\documentclass[12pt, draftclsnofoot, onecolumn]{IEEEtran}

\usepackage{cite}
\usepackage{amsmath,amssymb,amsfonts,bm,color,amsthm}
\usepackage{algorithm,algorithmic}
\usepackage{graphicx}
\usepackage{setspace,multicol,multirow,subfigure,float}
\usepackage{enumitem}
\usepackage{booktabs}

\allowdisplaybreaks[1]

\newtheorem{theorem}{Theorem}
\newtheorem{corollary}{Corollary}

\allowdisplaybreaks[4]

\begin{document}
	
\title{Massive Unsourced Random Access for Near-Field Communications}                  

\author{Xinyu~Xie,
	Yongpeng~Wu,~\IEEEmembership{Senior~Member,~IEEE,}
	Jianping~An,~\IEEEmembership{Senior~Member,~IEEE,}
	Derrick~Wing~Kwan~Ng,~\IEEEmembership{Fellow,~IEEE,}
	Chengwen~Xing,~\IEEEmembership{Member,~IEEE,}
	and~Wenjun~Zhang,~\IEEEmembership{Fellow,~IEEE}
	
	\thanks{X. Xie, Y. Wu, and W. Zhang are with the Department of Electronic Engineering, Shanghai Jiao Tong University, Shanghai 200240, China (e-mails: \{xinyuxie, yongpeng.wu, zhangwenjun\}@sjtu.edu.cn).
		
	J. An and C. Xing are with the School of Information and Electronics, Beijing Institute of Technology, Beijing 100081, China (e-mails: xingchengwen@gmail.com, an@bit.edu.cn).

	D. W. K. Ng is with the School of Electrical Engineering and Telecommunications, University of New South Wales, Sydney, NSW 2052, Australia (e-mail: w.k.ng@unsw.edu.au).

	\emph{Corresponding author: Yongpeng Wu.}}
}

\maketitle
\vspace{-30pt}

\begin{abstract}

This paper investigates the unsourced random access (URA) problem with a massive multiple-input multiple-output receiver that serves wireless devices in the near-field of radiation. We employ an uncoupled transmission protocol without appending redundancies to the slot-wise encoded messages. To exploit the channel sparsity for block length reduction while facing the collapsed sparse structure in the angular domain of near-field channels, we propose a sparse channel sampling method that divides the angle-distance (polar) domain based on the maximum permissible coherence. Decoding starts with retrieving active codewords and channels from each slot. We address the issue by leveraging the structured channel sparsity in the spatial and polar domains and propose a novel turbo-based recovery algorithm. Furthermore, we investigate an off-grid compressed sensing method to refine discretely estimated channel parameters over the continuum that improves the detection performance. Afterward, without the assistance of redundancies, we recouple the separated messages according to the similarity of the users' channel information and propose a modified $ K $-medoids method to handle the constraints and collisions involved in channel clustering. Simulations reveal that via exploiting the channel sparsity, the proposed URA scheme achieves high spectral efficiency and surpasses existing multi-slot-based schemes. Moreover, with more measurements provided by the overcomplete channel sampling, the near-field-suited scheme outperforms its counterpart of the far-field.

\end{abstract}

\begin{IEEEkeywords}
	
Compressed sensing, massive machine-type communications, massive MIMO, near-field communications, unsourced random access.

\end{IEEEkeywords}

\section{Introduction}

With the prosperity of the Internet-of-Things (IoT), massive machine-type communications (mMTC)~\cite{DSG17}, also known as massive access~\cite{WGZ20,CNY21}, has been identified as one of the three typical application scenarios in the fifth-generation (5G) mobile communication systems. Specifically, devices in the context of mMTC are activated with low probability and deliver only small-size packets to the base station (BS). Unfortunately, applying conventional access technologies tailored for human-type communications to this emerging scenario would lead to extremely low resource utilization since delicately arranging transmission resources to a massive number of users would result in prohibitive signaling overhead and long latency \cite{ZWZ16}. Therefore, it requires novel theories and paradigms to establish pragmatic machine-centric wireless networks.

In practice, mMTC schemes usually adopt a \emph{grant-free} random access protocol, where the users send their messages directly to the BS without explicit scheduling. As the users are blind to the receiver before communication, a typical class of grant-free massive access schemes, e.g., \cite{CSY18,LY18,KGW20}, rely on the pre-allocation of fixed non-orthogonal pilots to perform joint activity detection and channel estimation (JADCE) before the subsequent data transmission. Due to the sporadic traffic pattern, JADCE can be cast as a compressed sensing (CS) problem and solved by conventional sparse recovery methods such as approximate message passing (AMP)~\cite{CSY18} or orthogonal AMP (OAMP)~\cite{CLP21, CLP23}. However, it becomes difficult or even impossible to assign unique pilots to cope with the rapidly growing number of users, especially in the roll-out of the IoT. To relieve this bottleneck, a novel modality of \emph{unsourced random access} (URA) has been introduced in~\cite{P17}. In particular, URA users are compelled to select their messages from a common codebook. Accordingly, the receiver only acquires a list of transmitted messages independent of the identities of the transmitters. Taking this common codebook assumption, URA has the potential to support infinitely many mMTC users, while the system performance depends only on the number of concurrently active users.

URA was first addressed in~\cite{P17} from the information theory perspective, where the author derived the achievability bound on the per-user error rate in additive white Gaussian noise (AWGN) channels considering the impacts of the finite block length (FBL) theory. Then, comprehensive studies for quasi-static Rayleigh fading channels were conducted in~\cite{KKF20,GWS22,GWL23}. Due to the fact that the dimension of the common codebook grows exponentially with respect to the message length, practical coding schemes approaching the performance of the FBL benchmark take a divide-and-conquer strategy to alleviate the unmanageable computational burden. For instance, a pilot/data division scheme based on interleave division multiple-access (IDMA) was proposed in~\cite{VNC19}. Specifically, in the first stage, pilot sequences are selected from a common pool relying on the first few bits of the payload, which decide the adopted interleavers for the low-density parity check (LDPC) codes to encode the rest of the data. Furthermore, a concatenated coding framework, termed coded compressed sensing (CCS), was presented in~\cite{ACN20}. In particular, each message is partitioned into several segments. Indeed, they are coupled by appending parity check bits generated from a linear block code, thereby forming multiple sub-messages to be sent over successive slots. Meanwhile, the receiver detects the codeword activity and then reconstructs the original messages as the valid paths of a tree-based forward error correction process. Besides, \cite{WZC22} employed a codebook of binary chirps and the shift property was investigated for message stitching. Despite various efforts have been devoted to the literature, the aforementioned schemes are confined to a single-antenna receiver and their results are typically not applicable to multi-antenna systems due to the impacts of fading and the multi-antenna correlation~\cite{WGZ20}.

As a promising technology to support the emerging massive access, massive multiple-input multiple-output (MIMO)~\cite{BHS17} exploits the degrees of freedom (DoF) brought by the spatial dimensions via the same time/frequency resources to mitigate interference and increase spectral efficiency. For instance, pilot-aided URA schemes were extended to the case of massive MIMO in~\cite{FMJ22,LWZ22,GNC22}, where JADCE was formulated as a multiple measurement vector (MMV) problem~\cite{CSY18}. Nevertheless, the CS-based MMV approaches face the fundamental limitation that to detect $ \bar{K}_{\mathrm{a}} $ active codewords among $ \bar{K} $ potential ones, the required coherence block length $ \bar{N} $ is in the order of $ \mathcal{O}( \bar{K}_{\mathrm{a}} \log ( \bar{K} ) ) $ (or empirically, $ \bar{N} > \bar{K}_{\mathrm{a}} $)~\cite{LY18}, which restricts the total coding rate. In fact, considering the channel hardening effect of massive MIMO, the authors in~\cite{FHJ21} introduced a covariance-based estimation formulation to recover large-scale fading coefficients (LSFCs) for the inner codeword activity detection (AD) of CCS. Although the proposed non-Bayesian maximum likelihood estimator in~\cite{FHJ21} can identify up to $ \bar{K}_{\mathrm{a}} = \mathcal{O}( \bar{N}^{2} ) $ simultaneously active users, the required number of antennas must exceed $ \bar{K}_{\mathrm{a}} $, which may not always hold in practice. Recently, an uncoupled compressed sensing (UCS) scheme was proposed in~\cite{SBM21} that leveraged the instantaneous channel state information (CSI) of massive MIMO to improve the coding rate of CCS. Specifically, the strong correlation between users' channels across different time slots was captured to recombine separated codewords, thereby eliminating the need for integrating redundancies in the encoded data. Besides,~\cite{DLG20} proposed to modulate the partitioned data to rank-$ 1 $ tensors and the double-blind data-channel acquisition at the receiver side was conducted by tensor decomposition. It is worth noting that all these works adopted a Rayleigh fading channel model, which is an impractical setting considering that antennas at the BS terminal are subjected to high correlations in realistic wireless propagation environments~\cite{XWG20,GWY21}. On the other hand, more and more experimental evidence and analyses have revealed that the high spatial resolution provided by multiple antennas against finite propagation paths leads to a sparse structure of the physical channels in the angular domain~\cite{BHS10,ZYZ18}. In addition, the authors in~\cite{XWA22} captured the sparsity characteristic to assist the uncoupled URA transmission such that the designed receiver can reduce the block length requested for JADCE and is capable of resolving the codeword collision whenever the arriving directions of conflicting users are separable. Yet, the considered geometric channel model and its sparse representation in~\cite{XWA22} and various massive access schemes, e.g.,~\cite{KGW20,LW21,CZY22,DK23}, rely highly on the plane wave approximation of the spherical wavefront, which is valid in the far-field region of the antenna array where the angular electromagnetic field distribution is independent of the distance.

In fact, the boundary dividing the near- and far-field of radiation, known as the \emph{Rayleigh distance}~\cite{SJ17}, is proportional to the square of the antenna aperture and would be significantly enlarged by the growing number of antennas in massive access scenarios. As a result, a considerable number of users in practical massive MIMO networks would inevitably fall into the near-field area, especially for millimeter-wave/terahertz devices that must be located in proximity to the BS due to inherently severe path loss attenuation~\cite{PZZ22,GTZ22}. Particularly, in practical near-field regions, the spherical wave propagation is strictly obeyed and the arrived wave phase is decided by both the distance and the impinging angle~\cite{CWL23}. Unfortunately, sparsity-exploiting channel estimation (CE) or massive access techniques designed for far-field communications can hardly be adapted to the near-field scenario as the angular domain sparsity under the standard spatial Fourier transformation~\cite{ZYZ18} is now damaged by the essential distance-dependent channel characteristic. Seeking alternative sparse representations for compressive near-field CE, uniform and non-uniform grid partition methods were proposed in~\cite{HJW20} and~\cite{CD22}, respectively. Nevertheless, the resultant channel compression dictionary typically exhibits strong column coherence when applied, which causes non-negligible interference with CS recovery methods. Besides, the dictionary is generally overcomplete, i.e., the number of atoms goes beyond the number of antenna observations, which significantly increases the dimension of the JADCE problem. Moreover, the underlying locations of transmitters need not lie on the discretely sampled grid. The ``off-grid'' effect may further degrade the performance of the reconstruction algorithm. This raises new challenges to the design of massive access schemes for near-field users concerning computational complexity and estimation accuracy.

This paper proposes a URA scheme to support the massive access of near-field users. We develop novel CS algorithms to detect the transmitted codewords and reconstruct the corresponding channels. The recovered CSI further facilitates the redundancy-free message stitching process based on channel grouping. The main contributions of this paper are summarized as follows.
\begin{itemize}
	\item
	We propose a novel polar domain channel sampling method to capture the sparse characteristics of the near-field channel. Motivated by the symmetry observed in the coherence function between near-field array responses, we employ a uniform angle-ring sampling strategy, with intervals jointly determined by the maximum permissible coherence. In comparison to the existing approach for sparse near-field channel representation in \cite{CD22}, the proposed method facilitates lower column coherence in the compressed channel dictionary while maintaining a similar number of channel samples.
	\item
	We exploit the structured sparsity of the near-field channel and propose a novel turbo-type compressive sampling matching pursuit (Turbo-CoSaMP) algorithm to address the JADCE problem. Within the alternating process, we initially consider the coherent sparsity of the near-field channel vector in the spatial domain. The simultaneous CoSaMP (S-CoSaMP) algorithm is introduced for the AD of codewords, which simultaneously narrows the possible range of active elements for CE. To reduce the required measurements of such a MMV approach, we further promote the near-field channel sparsity in the proposed polar domain. We investigate the restricted isometry property (RIP) of the overcomplete channel compression dictionary to develop a two-dimensional (2D) version of CoSaMP for CE, which is computationally more efficient than its equivalent vectorized form. Numerical results suggest that the proposed algorithm is suited for the scenarios where the coherence block length is far less than the number of simultaneously active users, revealing its advantage compared to state-of-the-art sparse recovery methods, e.g.,~\cite{CLP21, FHJ21}.
	\item
	To mitigate the effect of basis mismatch resulting from the discrete channel sampling against the continuity of near-field channel parameters, we embed a Newtonized coordinate descent step inside Turbo-CoSaMP, which jointly optimizes the critical channel parameters, i.e., DoAs and distances, over the continuous domain. A cyclic refinement is further introduced to cooperate with all the intermediate estimations. The developed Newtonized Turbo-CoSaMP (N-Turbo-CoSaMP) algorithm outperforms Turbo-CoSaMP with a fast convergence rate.
	\item
	We design a modified $ K $-medoids method as the kernel of the clustering-based decoding. Both balanced and unbalanced assignment strategies are investigated for efficient message stitching without/with codeword collision. The proposed clustering algorithm updates the cluster center as the centrally located data instances rather than the means of its constituents as in $ K $-means, which is robust to contaminated data and efficient for collision resolution.
\end{itemize}
We demonstrate by simulations that the proposed UCS scheme exhibits a remarkable performance of successful decoding while enjoying high spectral efficiency. Besides, due to the promoted sparsity introduced by the overcomplete dictionary of the polar domain sampling, the designed URA scheme for near-field communications even outperforms its counterpart in far-field cases.

The remainder of this paper is organized as follows. We outline the signal model of the considered UCS scheme for URA in Section II, where the near-field channel model is also introduced. In Section III, we propose a polar domain sampling method to unveil the sparse characteristic of the near-field channel. The Turbo-CoSaMP algorithm and the improved Newtonized method for JADCE are developed in Section IV. In Section V, we propose the modified $ K $-medoids algorithm for clustering decoding in UCS. Numerical results of the system performance are presented in Section VI, followed by concluding remarks drawn in Section VII.

Throughout this paper, we denote the $ j $-th column of matrix $ \mathbf{X} $ by $ \mathbf{x}_{j} $, whose $ i $-th element is represented as $ [ \mathbf{x}_{j} ]_{i} $. $ x_{i, j} $ or $ [ \mathbf{X} ]_{i, j} $ stands for the $ (i, j) $-th entry of $ \mathbf{X} $. Given any complex variable or matrix, $ \Re \{ \cdot \} $ returns its real part. We use superscripts $ (\cdot)^*,(\cdot)^T $, and $ (\cdot)^H $ to denote the conjugate, transpose, and conjugate transpose, respectively. $ \operatorname{vec}( \mathbf{X} ) $ generates a column vector by stacking the columns of $ \mathbf{X} $. $ \operatorname{tr}( \mathbf{X} ) $ and $ \operatorname{det}( \mathbf{X} ) $ calculates the trace and determinant of $ \mathbf{X} $, respectively. The operator $ \otimes $ denotes the Kronecker product and $ \lfloor \cdot \rfloor $ rounds an element to the nearest integer less than or equal to that element. For an integer $ X > 0 $, we use the shorthand notation $ [ X ] $ to represent the set $ \{ 1, 2, \dots, X\} $. $ | \cdot |_c $ denotes the cardinality of a set and $ \mathcal{X} \backslash \mathcal{Y} $ represents the set $ \{z: z \in \mathcal{X}, z \notin \mathcal{Y} \} $. We denote the Euclidean norm of a vector by $ \| \cdot \|_{2} $ and the Frobenius norm of a matrix by $ \| \cdot \|_{F} $. Finally, $ \frac{\partial}{\partial x} $ is the partial derivative with respect to $ x $.

\section{System Model}

\subsection{System Depiction}

\begin{figure}[!t]
	\centering
	\includegraphics[width=11cm]{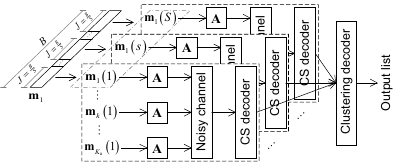}
	\caption{Schematic of the UCS scheme for URA, where the box marked ``$ \mathbf{A} $'' represents the common codebook.}
	\label{ucs}
\end{figure}

We first provide an overview of the UCS transmission protocol employed in this paper before giving specific signal descriptions. At the transmitter side, message sequences are split into small segments, then each of which is mapped to a codeword adopting the common codebook for multi-slot transmission~\cite{SBM21,XWA22}. On the other hand, decoding the transmitted message list takes two steps. The receiver first performs JADCE to the received codewords on a per-slot basis, which is termed CS decoding. Then, codewords or demapped message segments from the same transmitter are reconnected based on the similarity of their corresponding channels, forming the structure of a clustering decoder. The overall architecture of UCS is illustrated in Fig.~\ref{ucs}.

\subsection{Signal Model}

This paper considers a single-cell network where a BS equipped with $ M $ antennas serves $ K_{\mathrm{tot}} $ single-antenna users. In a certain period, only $ K_{\mathrm{a}} \ll K_{\mathrm{tot}} $ mMTC users are activated simultaneously. Under the setting of URA, encoding a $ B $-bit message would require a codebook containing $ 2^{B} $ codewords, which is of high burden even for relatively small-sized data. Therefore, to reduce the system complexity, we split the long message sequence $ \mathbf{m}_{k} \in \{ 0, 1 \}^{B} $, $ k \in [ K_{\mathrm{a}}\textbf{} ] $, into $ S $ small segments, i.e., $ \mathbf{m}_{k} = [\mathbf{m}_{k}( 1 ), \dots, \mathbf{m}_{k}( S )] $, arranged for an $ S $-slot transmission. Each segment of length $ J = B / S $ is then encoded as a codeword of length $ N $ in the dimension-reduced common codebook $ \mathbf{A} = [ \mathbf{a}_{1}, \dots, \mathbf{a}_{2^{J}} ] \in \mathbb{C}^{N \times 2^{J}} $. The encoding process can be interpreted that $ \mathbf{m}_{k}( s ), s \in [ S ] $, is mapped to the codeword $ \mathbf{a}_{\operatorname{dec}( \mathbf{m}_{k}( s ) ) + 1} $, where $ \operatorname{dec}( \cdot ) $ represents the radix ten equivalent of a binary vector. Then, the coded messages are transmitted adopting a coherence block of $ N S $ symbol transmissions, where the channel coefficients remain constant. At the BS, the received signal during the $ s $-th transmission interval is given by
\begin{align}
	\mathbf{Y}( s ) = \sum_{k = 1}^{K_{\mathrm{a}}} \mathbf{a}_{\operatorname{dec}( \mathbf{m}_{k}( s ) ) + 1} \mathbf{h}_{k}^{T} + \mathbf{W}( s ) =  \mathbf{A} \mathbf{\Xi}( s ) \mathbf{H} + \mathbf{W}( s ), \label{sig1}
\end{align}
where $ \mathbf{\Xi}( s ) \in \{ 0, 1 \}^{2^{J} \times K_{\mathrm{a}}} $ is a binary matrix with nonzero values located at the $ ( \operatorname{dec}( \mathbf{m}_{k}( s ) ) + 1 ) $-th entry of the $ k $-th column, $ \mathbf{H} = [ \mathbf{h}_1, \dots, \mathbf{h}_{K_{\mathrm{a}}} ]^{T} \in \mathbb{C}^{K_{\mathrm{a}} \times M} $ is the channel matrix, and $ \mathbf{W}( s ) \in \mathbb{C}^{N \times M} $ is the matrix of independent identically distributed (i.i.d.) AWGN samples with zero mean and variance $ \sigma^{2} $.

\subsection{Channel Model}

An antenna array's field region comprises the near-field and far-field regions. The boundary between them is known as \emph{Rayleigh distance} $ D_{R} = 2 D^{2} / \lambda_{c} $~\cite{SJ17}, where $ D $ is the maximum aperture dimension of the array and $ \lambda_{c} $ is the wavelength of the carrier frequency. In this paper, we employ a half-wavelength-spaced uniform linear array (ULA) with $ M $ antennas equipped at the BS and $ D_{R} = \frac{1}{2} ( M - 1 )^{2} \lambda_{c} $. The physical meaning of Rayleigh distance is that a point source located $ D_{R} $ away from the center of the receiving antenna array would radiate a spherical wavefront such that over the total length $ D $, the wave phases vary less than $ \pi / 8 $ radians~\cite{SJ17}. In other words, the maximum phase error $ \pi / 8 $ allows us to approximate the wavefront by a plane wave. Accordingly, ignoring the subscripts of user indices, the \emph{far-field channel} can be modeled in terms of the underlying physical paths as~\cite{BHS17,CD22}
\begin{align}
	\mathbf{h}_{\mathrm{f}} = \sqrt{\dfrac{M}{L}} \sum_{l = 1}^{L} g_{l} \mathbf{e}_{\mathrm{f}} \left( \theta_{l} \right), \label{fchannel}
\end{align}
where $ L $ is the number of physical paths, $ g_{l} $ is the complex attenuation factor of the $ l $-th path, and we denote by $ \theta_{l} = \sin ( \psi_{l} ) \in ( -1, 1 ) $ the direction of arrival (DoA) with $ \psi_{l} \in ( -\pi / 2, \pi / 2 ) $ the angle of arrival (AoA). Moreover, the array response under the plane wave assumption is given by
\begin{align}
	\mathbf{e}_{\mathrm{f}} \left( \theta_{l} \right) = \frac{1}{\sqrt{M}} \left[ 1, e^{j \pi \theta_{l}}, \cdots, e^{j ( M - 1 ) \pi \theta_{l}} \right]^{T} \in \mathbb{C}^{M}, \label{efar}
\end{align}
which depends on the impinging angle only. Note that $ \mathbf{e}_{\mathrm{f}} $ takes on a Vandermonde structure, i.e., the phase of each element is linear to the antenna index.

Similarly, the multipath \emph{near-field channel} can be modeled as
\begin{align}
	\mathbf{h}_{\mathrm{n}} = \sqrt{\dfrac{M}{L}} \sum_{l = 1}^{L} g_{l} \mathbf{e}_{\mathrm{n}} \left( \theta_{l}, d_{l} \right). \label{nchannel}
\end{align}
However, since the plane wave assumption is no longer valid in the near-field region, the array response $ \mathbf{e}_{\mathrm{n}} ( \theta_{l}, d_{l} ) $ in~\eqref{nchannel} is derived on both the DoA and the distance as~\cite{CD22}
\begin{align}
	\mathbf{e}_{\mathrm{n}} \left( \theta_{l}, d_{l} \right) = \frac{1}{\sqrt{M}} \left[ e^{j \frac{2 \pi}{\lambda_{c}} ( d_{l} - d_{l, 1} )}, \dots, e^{j \frac{2 \pi}{\lambda_{c}} ( d_{l} - d_{l, M} )} \right]^{T} \in \mathbb{C}^{M}, \label{enear}
\end{align}
where $ d_{l} $ is the distance from the $ l $-th scatterer to the center of the antenna array and $ d_{l, m} = \sqrt{d_{l}^{2} - d_{l} \theta_{l} \kappa_{m} \lambda_{c}  + \kappa_{m}^{2} ( \frac{\lambda_{c}}{2} )^{2}} $ is the distance from scatterer to the $ m $-th antenna with $ \kappa_{m} = \frac{2 m - M - 1}{2} $. As can be observed from~\eqref{enear}, the near-field channel exhibits antenna-specific distances, which is highly related to the geolocations of the transmitters.

\section{Sparse Domain Channel Representation}

In this section, we investigate the sparse characteristic of the near-field channel. It is shown that the angular domain channel in the near-field, obtained through spatial Fourier transform, exhibits a severe energy spread. Therefore, we present a joint angle-distance sampling method to achieve a sparse representation of the near-field channel.

\subsection{Limitation of Angular Domain Sampling in the Near-Field}
As the array response in the far-field condition~\eqref{efar} is indeed a discrete Fourier vector, by uniformly sampling DoAs within the range $ ( -1, 1 ) $, we have the equivalent representation that $ \mathbf{h}_{\mathrm{f}} = \mathbf{U} \widetilde{\mathbf{h}}_{\mathrm{f}} $, where $ \widetilde{\mathbf{h}}_{\mathrm{f}} $ is the angular domain channel of $ \mathbf{h}_{\mathrm{f}} $ and
\begin{align}
	\mathbf{U} = \left[ \mathbf{e}_{\mathrm{f}} \left( \tfrac{-M + 1}{M} \right), \dots, \mathbf{e}_{\mathrm{f}} \left( \tfrac{2 ( m - 1 ) - M + 1}{M} \right), \dots, \mathbf{e}_{\mathrm{f}} \left( \tfrac{M - 1}{M} \right) \right] \in \mathbb{C}^{M \times M} \label{dftmtx}
\end{align}
is a discrete Fourier transform (DFT) matrix which decomposes the spatial domain channel into multiple paths along some fixed directions. In fact, the number of physical paths in the realistic wireless communication environment is limited and usually $ L \ll M $ holds, especially in massive MIMO systems~\cite{BHS10,ZYZ18}. Therefore, the angular domain channel vector $ \widetilde{\mathbf{h}}_{\mathrm{n}} $ is sparse with only $ L $ peaks, as shown in Fig.~\ref{channel}.

\begin{figure}[!t]
	\centering
	\includegraphics[width=7cm]{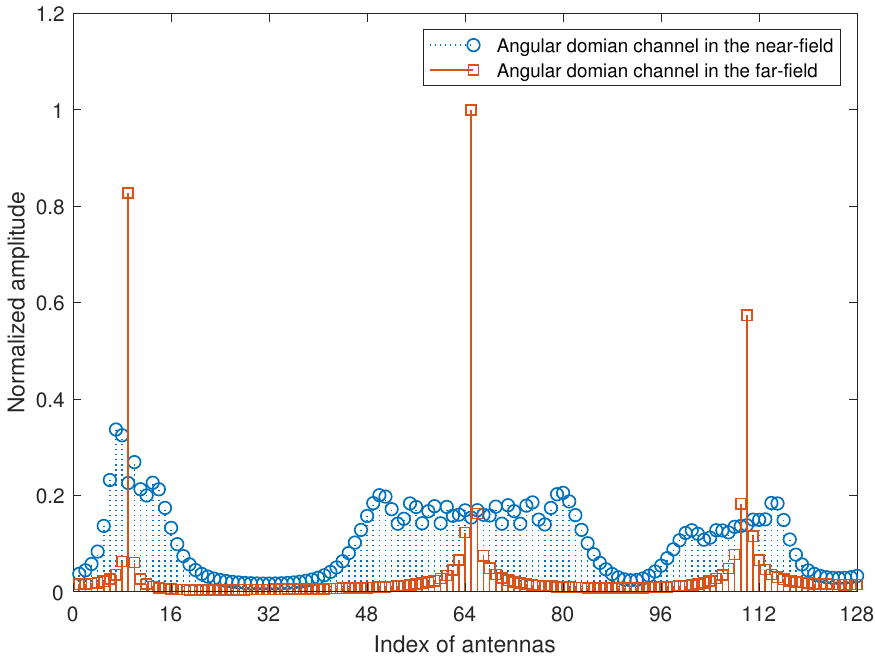}
	\caption{An illustration of the angular domain channel representation in near- and far-field conditions. The considered carrier frequency is $ 3 $ GHz, $ M = 128 $, $ L = 3 $, $ ( \theta_{1}, \theta_{2}, \theta_{3} ) = ( -\sqrt{3} / 2, 0.01, \sqrt{2} / 2 ) $, and $ ( g_{1}, g_{2}, g_{3} ) $ are drawn independently from a complex Gaussian distribution of zero mean and unit variance. The communication distance is set to be $ 1 $ km for the far-field channel and \mbox{$ 10 $ m} for the near-field channel. The maximum amplitude of the channel coefficients is normalized to one.}
	\label{channel}
\end{figure}

In contrast, the Vandermonde structure of~\eqref{efar} vanishes in the near-field condition, leading to a severe energy spread if the angular domain transform is considered. It can be observed in \mbox{Fig.~\ref{channel}} that the energy from a given DoA is captured by the far-field array responses from several other directions. Hence, near-field channels cannot be deemed as sparse in the angular domain.

\subsection{Promoting Near-Field Channel Sparsity via Angle-Distance Sampling}

To reach a specific sparse representation of the near-field channel, sampled array responses should divide the space in both the angle and distance dimensions, i.e., the polar domain. Towards this end, we start with analyzing the coherence between two near-field array responses that can be calculated as \cite{CD22}
\begin{align}
	\Gamma(\theta, d ; \theta', d') &= \left| \mathbf{e}_{\mathrm{n}}^{H}( \theta', d' ) \mathbf{e}_{\mathrm{n}}( \theta, d ) \right| \notag \\
	&\approx \left| \frac{1}{M} \sum_{m = -( M - 1 ) / 2}^{( M - 1 ) / 2} \exp \left( - j m \pi (\theta - \theta') + j \tfrac{\lambda_{c}}{4} m^{2} \pi \left( \tfrac {1-\theta^{2}}{d} - \tfrac{1 - \theta'^{2}}{d'} \right) \right) \right|. \label{cor1}
\end{align}
Define $ \frac{1 - \theta^{2}}{d} = \frac{1}{\phi} $ as a distance ring $ \phi $. In which case, we can equivalently express the coherence function as $ \Gamma(\theta, \frac{1}{\phi} ;\theta', \frac{1}{\phi'}) $. It can be observed in \eqref{cor1} that the coherence between two near-field array responses is determined by the variance of their DoAs and distance rings, i.e., $ \Delta_{\theta} = \theta' - \theta $ and $ \Delta_{\phi} = \frac{1}{\phi'} - \frac{1}{\phi} $. In fact, it can be easily proved that the coherence function possesses the following property
\begin{align}
	\Gamma(\theta, \tfrac{1}{\phi} ; \theta + \Delta_{\theta}, \tfrac{1}{\phi} + \Delta_{\phi}) = \Gamma(\theta, \tfrac{1}{\phi} ; \theta - \Delta_{\theta}, \tfrac{1}{\phi} + \Delta_{\phi}) &= \Gamma(\theta, \tfrac{1}{\phi} ; \theta + \Delta_{\theta}, \tfrac{1}{\phi} - \Delta_{\phi}) \notag \\
	&= \Gamma(\theta, \tfrac{1}{\phi} ; \theta - \Delta_{\theta}, \tfrac{1}{\phi} - \Delta_{\phi}), \label{cor2}
\end{align}
which suggests that we can divide the near-field space by uniformly sampling the angle-ring pairs while keeping the coherence between two adjacent array responses fixed.

In order to determine the intervals $ \Delta_{\theta} $ and $ \Delta_{\phi} $, we approximate \eqref{cor2} using a polynomial function, i.e., $ \Gamma(\theta, \frac{1}{\phi} ; \theta + \Delta_{\theta}, \frac{1}{\phi} + \Delta_{\phi}) \approx 1 - q_{1} M^{2} \Delta_{\theta}^{2} - q_{2} \lambda_{c}^{2} M^{4} \Delta_{\phi}^{2} $, where $ q_{1} = 0.3917 $ and $ q_{2} = 0.001624 $ are multinomial coefficients. With the value of coherence $ \gamma $ fixed, by respectively letting $ \Delta_{\phi} = 0 $ and $ \Delta_{\theta} = 0 $ in the polynomial function, we reach the intervals of angle and distance ring samples, which are represented as
\begin{align}
	\Delta_{\theta} = \frac{1}{M} \sqrt{\frac{1 - \gamma}{q_{1}}}, \quad \Delta_{\phi} = \frac{1}{M^{2} \lambda_{c}} \sqrt{\frac{1 - \gamma}{ q_{2}}}.
\end{align}
Consider that the DoAs are within the range $ ( -1, 1 ) $ and the distances are within the range $ ( D_{F}, + \infty ) $\footnote{Note that the far-field array response is also taken into consideration here by letting $ d = + \infty $.}, where $ D_{F} = 0.5 \sqrt{D^{3} / \lambda_{c}} $ is known as the Fresnel distance \cite{SJ17}. Accordingly, the reciprocals of distance rings are in the range of $ ( 0, 4 \sqrt{2} / ( M \sqrt{M} \lambda_{c} ) ) $. Under such circumstances, the sampled DoAs and distance rings can be expressed as 
\begin{align}
	\tilde{\theta}_{p} &= -1 + \frac{( p - \tfrac{1}{2} )}{M} \sqrt{\frac{1 - \gamma}{q_{1}}}, \ p = 1, \dots, P_{\theta}, \\
	\tfrac{1}{\tilde{\phi}_{p'}} &= \frac{( p' - \tfrac{1}{2} )}{M^{2} \lambda_{c}} \sqrt{\frac{1 - \gamma}{ q_{2}}}, \ p' = 1, \dots, P_{\phi},
\end{align}
where $ P_{\theta} = \lfloor 2 M \sqrt{q_{1} / ( 1 - \gamma ) } \rfloor $ and $ P_{\phi} = \lfloor 4 \sqrt{2 M q_{2} / ( 1 - \gamma )} \rfloor $ are the number of samples in the areas of angle and distance ring, respectively. Furthermore, the sampled distance $ d_{p.p'} $ related to the $ p $-th angle sample and the $ p' $-th distance ring sample can be calculated as $ \tilde{d}_{p.p'} = ( 1 - \tilde{\theta}_{p}^{2} ) \tilde{\phi}_{p'} $. Then, we construct the near-field sampling matrix as $ \mathbf{B} = [ \mathbf{B}_{1}, \dots, \mathbf{B}_{P_{\phi}} ] \in \mathbb{C}^{M \times P} $, where $ \mathbf{B}_{p'} = [ \mathbf{e}_{\mathrm{n}}( \tilde{\theta}_{1}, \tilde{d}_{1, p'} ), \dots, \mathbf{e}_{\mathrm{n}}( \tilde{\theta}_{P_{\theta}}, \tilde{d}_{P_{\theta}, p'} ) ] \in \mathbb{C}^{M \times P_{\theta}} $ and $ P = P_{\theta} P_{\phi} $. It is worth mentioning that we have $ P > P_{\theta} > M $, i.e., $ \mathbf{B} $ is an overcomplete matrix. Finally, the near-field channel $ \mathbf{h}_{\mathrm{n}} $ can be transformed into its polar domain representation $ \widetilde{\mathbf{h}}_{\mathrm{n}} $ by $ \mathbf{h}_{\mathrm{n}} = \mathbf{B} \widetilde{\mathbf{h}}_{\mathrm{n}} $.

Note that a different polar domain sampling method was developed in \cite{CD22}. The authors uniformly sample $ M $ DoAs within $ ( -1, 1 ) $ as a precondition and then compute the distances based on the maximum coherence controlled by a parameter $ \beta_{\Delta} $ and the minimum distance $ \rho_{\mathrm{min}} $. Letting $ \rho_{\mathrm{min}} = D_{F} $, the total number of samples is given by $ M \lfloor \sqrt{M/2} / \beta_{\Delta} \rfloor $. In contrast, in the proposed approach, the DoAs and distances/distance rings are jointly determined based on parameter $ \gamma $. Assuming an equal maximum permissible coherence for both approaches, the proposed method yields a larger number of samples due to oversampling in DoAs. In other words, given a similar number of samples, the channel dictionary constructed adopting the proposed method exhibits lower maximum column coherence compared to that obtained in \cite{CD22}.

It is indicated in~\cite{BHS10,ZYZ18} that exploiting the sparsity of multipath channels in certain transformation domains promotes the DoF with respect to CSI acquisition to $ K_{\mathrm{a}} ( 1 - c \frac{K_{\mathrm{a}}}{N} ) $ in massive MIMO systems, where $ c \in ( 0, 1 ) $ is a parameter proportionally to the channel sparsity $ \rho $. For the far-field channel modeled as~\eqref{fchannel} with $ M $ DoA resolutions, we have $ \rho = \frac{| \mathcal{S} |_{c}}{M} $~\cite{ZYZ18}, where $ \mathcal{S} = \{ m : | [ \widetilde{\mathbf{h}}_{\mathrm{f}} ]_{m} | > \epsilon \} $ is the set of indices of dominant entries in $ \widetilde{\mathbf{h}}_{\mathrm{f}} $ for some appropriately chosen $ \epsilon $. In contrast, the polar domain sampling offers $ P \gg M $ spatial resolutions. Therefore, exploiting the polar domain of near-field channels provides more available DoF for design that can potentially enhance the recovery performance.

\section{2D-CS-Based Codeword Activity Detection and Channel Estimation}

In this section, we aim to determine the codeword activity and recover the corresponding channels. We first take both the spatial row sparsity and the polar domain sparsity into account to propose the Turbo-CoSaMP algorithm, operating S-CoSaMP for AD and 2D-CoSaMP for CE in a turbo manner. Newtonized refinements are further investigated within the turbo detection process to suppress the influence of basis mismatch caused by the discrete channel sampling.

\subsection{Problem Formulation}

Replacing $ \mathbf{H} $ in~\eqref{sig1} by the equivalent polar domain channel expression, we represent the received signal from the near-field users as
\begin{align}
	\mathbf{Y} =  \mathbf{A} \underbrace{\mathbf{\Xi} \widetilde{\mathbf{H}}_{\mathrm{n}} \mathbf{B}^{T}}_{\triangleq \mathbf{Z}} + \mathbf{W} = \mathbf{A} \mathbf{X} \mathbf{B}^{T} + \mathbf{W}, \label{mtxsig}
\end{align}
where $ \widetilde{\mathbf{H}}_{\mathrm{n}} = [ \widetilde{\mathbf{h}}_{\mathrm{n} 1}, \dots, \widetilde{\mathbf{h}}_{\mathrm{n} K_{\mathrm{a}}} ]^{T} \in \mathbb{C}^{K_{\mathrm{a}} \times P} $, $ \mathbf{Z} \triangleq \mathbf{\Xi} \widetilde{\mathbf{H}}_{\mathrm{n}} \mathbf{B}^{T} = \mathbf{X} \mathbf{B}^{T} \in \mathbb{C}^{2^{J} \times M} $, and $ \mathbf{X} \triangleq \mathbf{\Xi} \widetilde{\mathbf{H}}_{\mathrm{n}} \in \mathbb{C}^{2^{J} \times P} $. To detect the codeword activity patterns and recover the relevant channels is equivalent to reconstructing either $ \mathbf{Z} $ or $ \mathbf{X} $ from the noisy observation $ \mathbf{Y} $. Note that we do not aim to distinguish $ \mathbf{\Xi} $ and $ \widetilde{\mathbf{H}}_{\mathrm{n}} $ from $ \mathbf{X} $, since the URA receiver has no obligation to relate the codeword to its source~\cite{P17,WGZ20}, i.e., the users are unsourced from the perspective of the BS. Also, the circumstance of codeword collision is considered in this section, i.e., the columns of $ \mathbf{\Xi} $ might have more than one nonzero element.

In fact, the targeted signal reveals a two-tier sparsity structure. First, due to the sporadic traffic pattern of the mMTC users, the matrix $ \mathbf{Z} $ is row sparse and elements within each row share the same sparsity profile. Moreover, each polar domain channel vector has a small portion of entries with significant amplitude~\cite{CD22}, which further decreases the sparse rate of the model.

Various existing works on massive access, e.g., \cite{CSY18,FMJ22,SBM21}, capture the row sparsity of $ \mathbf{Z} $ and address the JADCE problem by MMV approaches. Indeed, the asymptotic analysis of the MMV approach has proven that the AD error rate tends to be zero in the asymptotic limit where $ K_{\mathrm{a}} $, $ 2^{J} $, $ N $, and $ M $ all tend to infinity, while $ 2^J / N $ and $ K_{\mathrm{a}} / 2^{J} $ converge to some fixed positive values~\cite{LY18}. Nevertheless, it generally requires that $ N > K_{\mathrm{a}} $ holds. To decrease the required number of measurements, \cite{KGW20,XWA22} further leverage the angular domain sparsity of the far-field massive MIMO channel. In practice, they eliminate the effects of the DFT-based angular domain transformation matrix by simply multiplying its conjugate transpose, which reduces the corresponding JADCE problem to a standard formulation of sparse matrix inversion. However, in the near-field case, the approach is ineffective as $ \mathbf{B} $ is an overcomplete matrix. In the remainder of this section, we will develop novel JADCE algorithms for near-field massive access, which take both the row sparsity and the polar domain channel sparsity into consideration while handling the overcomplete near-field channel dictionary.

\subsection{Proposed Tubro-CoSaMP Algorithm for Signal Reconstruction}

We turn to the framework of CoSaMP~\cite{NT09} to handle the preceding sparse recovery problem, which takes a sequential selective strategy similar to the grouped successive interference cancellation (SIC) approach. Specifically, it follows an iterative procedure that retrieves prominent signal components by treating all the others as interferences and then removes the contribution of the newly estimated ones from the residual signal. This choice is motivated by the fact that CoSaMP, being a greedy pursuit approach, yields exact solutions with guaranteed speed under well-defined conditions. However, initially designed for the single measurement vector (SMV) problem, CoSaMP would have to handle a sensing matrix of unmanageable dimensions under the equivalent vectorized form of~\eqref{mtxsig}. Also, it assumes an independent signal support profile and ignores the underlying structured sparsity across elements.

\subsubsection*{Codeword AD via S-CoSaMP}

To fully exploit the structured sparsity, we first capture the coherent support pattern with respect to the spatial domain elements of near-field channels for AD. In fact, concurrently reconstructing signals sharing a common sparsity profile can be cast as a MMV problem~\cite{CSY18,LY18,KGW20} or a simultaneous sparse approximation problem~\cite{TGS06}. We introduce S-CoSaMP, which is the simultaneous version of CoSaMP, as the solution. Specifically, S-CoSaMP handles the spatial correlation in $ \mathbf{Z} $, or its residual $ \mathbf{R} $, by jointly calculating the coherence between each row and the sensing matrix, i.e., $ \mathbf{P}_{\mathrm{r}} = \mathbf{A}^{H} \mathbf{R} $. We determine the row support of $ \mathbf{Z} $ by an energy detection for the rows of $ \mathbf{P}_{\mathrm{r}} $ and elect $ 2 K_{\mathrm{a}} $ rows\footnote{In practical systems, the value of $ K_{\mathrm{a}} $ is random and generally unknown to the receiver. This can be addressed by replacing $ K_{\mathrm{a}} $ with the knowledge of the maximum number of active users $ K_{\mathrm{max}} $~\cite{DLG20}.
	
The idea of simultaneously choosing $ 2 K_{\mathrm{a}} $ active candidates in CoSaMP comes from the combinatorial algorithm~\cite{GST07} that accelerates the convergence speed.} with the largest energy. Then, their indices are combined with the outcome from the former stage to form the index set $ \mathcal{K}_{\mathrm{r}} $, which is a provisional list of the potential active codewords. Note that it still requires $ N > K_{\mathrm{a}} $ to identify active codewords with high probability. At this point, introducing the polar domain sparse structure would further promote the sparsity level of the CS paradigm and reduce the number of channel coefficients to be estimated, hoping to decrease the required block length.

\subsubsection*{Near-Field CE via 2D-CoSaMP}

After acquiring the codeword activity, we further leverage the polar domain sparsity for near-field CE, with the purpose of reconstructing the sparse channel matrix based on the following model
\begin{align}
	\mathbf{R} = \mathbf{A}_{\mathcal{K}_{\mathrm{r}}} \mathbf{X}_{\mathcal{K}_{\mathrm{r}}, :} \mathbf{B}^{T} + \widetilde{\mathbf{W}}, \label{mtxsig1}
\end{align}
where $ \mathbf{A}_{\mathcal{K}_{\mathrm{r}}} $ and $ \mathbf{X}_{\mathcal{K}_{\mathrm{r}}, :} $ are constructed by placing columns $ \mathbf{a}_{j}, j \in \mathcal{K}_{\mathrm{r}} $ and rows $ \mathbf{x}_{j, :}, j \in \mathcal{K}_{\mathrm{r}} $ in order, respectively, and $ \widetilde{\mathbf{W}} $ is the equivalent noise matrix. A possible solution starts with equivalently writing~\eqref{mtxsig1} in the vector form as $ \boldsymbol{r} = \boldsymbol{A} \boldsymbol{x} + \boldsymbol{w} $, where $ \boldsymbol{r} = \operatorname{vec}( \mathbf{R} ) \in \mathbb{C}^{N M} $, $ \boldsymbol{A} = \mathbf{B} \otimes \mathbf{A}_{\mathcal{K}_{\mathrm{r}}} \in \mathbb{C}^{N M \times | \mathcal{K}_{\mathrm{r}} |_{c} P} $, $ \boldsymbol{x} = \operatorname{vec}( \mathbf{X}_{\mathcal{K}_{\mathrm{r}}, :} ) \in \mathbb{C}^{| \mathcal{K}_{\mathrm{r}} |_{c} P} $, and $ \boldsymbol{w} = \operatorname{vec}( \widetilde{\mathbf{W}} ) \in \mathbb{C}^{N M} $. The resultant one-dimensional (1D) CS recovery problem can be tackled by various conventional SMV approaches, e.g., orthogonal matching-pursuit (OMP)~\cite{TG07} and AMP~\cite{DMM09}. However, this exponentially increases the computational complexity considering the large-scale signal involved in massive access.

To overcome the high computational burden, we stick to the matrix sketch~\eqref{mtxsig1} to retrieve the two-dimensional (2D) signal $ \mathbf{X}_{\mathcal{K}_{\mathrm{r}}, :} $ from $ \mathbf{R} $ with known measurement matrices $ \mathbf{A}_{\mathcal{K}_{\mathrm{r}}} $ and $ \mathbf{B} $. As the preliminary of the following discussion, we introduce the RIP~\cite{RV06} to characterize two random measurement matrices $ \widetilde{\mathbf{A}} \in \mathbb{C}^{N' \times J'} $ and $ \widetilde{\mathbf{B}} \in \mathbb{C}^{M' \times P'}  $. In particular, we say that $ \widetilde{\mathbf{A}} $ and $ \widetilde{\mathbf{B}} $ satisfy the RIP of order $ r $ if there exists a constant $ \delta_{r} $ such that
\begin{align}
	\left( 1 - \delta_{r} \right) \| \widetilde{\mathbf{X}} \|_{F}^{2} \leqslant \| \widetilde{\mathbf{A}} \widetilde{\mathbf{X}} \widetilde{\mathbf{B}}^{T} \|_{F}^{2} \leqslant \left( 1 + \delta_{r} \right) \| \widetilde{\mathbf{X}} \|_{F}^{2}, \label{rip}
\end{align}
holds for all $ \widetilde{\mathbf{X}} \in \mathbb{C}^{J' \times P'} $ that is $ r $-sparse, i.e., it has at most $ r $ nonzero entries. Suppose that $ \delta_{r} \ll 1 $, $ \mathbf{X}_{\mathrm{a}} = \widetilde{\mathbf{A}}^{H} \widetilde{\mathbf{A}} \widetilde{\mathbf{X}} \widetilde{\mathbf{B}}^{T} \widetilde{\mathbf{B}}^{*} $ approximates $ \widetilde{\mathbf{X}} $ by the reason that the $ j' $-th column of $ \widetilde{\mathbf{A}} $ and the $ p' $-th column of $ \widetilde{\mathbf{B}} $ capture most of the energy of $ [ \widetilde{\mathbf{X}} ]_{j', p'} $. In words, the large elements in $ \mathbf{X}_{\mathrm{a}} $ point to the nonzero entries in $ \widetilde{\mathbf{X}} $. In Appendix A, we provide the RIP analysis with respect to a partial DFT matrix $ \mathbf{A} $ and the proposed channel dictionary $ \mathbf{B} $, which suggests that increasing the number of measurements or the number of antennas allows a lower threshold of $ \delta_{r} $.

Denote $ R $ the sparsity level of $ \mathbf{X} $. According to the RIP, we recognize $ 2 R $ atoms that contribute most to the current residual by selecting the $ 2 R $ largest entries from the proxy $ \mathbf{A}_{\mathcal{K}_{\mathrm{r}}}^{H} \mathbf{RB}^{*} $. Note that the involved matrix-to-matrix multiplications require $ \mathcal{O}( N M ( P + |\mathcal{K}_{\mathrm{r}} |_{c} ) ) $ computations, while the equivalent calculation in the case of 1D-CS would yield a much higher complexity of $ \mathcal{O}( N M P | \mathcal{K}_{\mathrm{r}} |_{c} ) $. Then, the indices of these atoms, together with the current estimated signal support, are merged to a set $ \mathcal{K} = \{ ( j_{1}, p_{1} ), \dots, ( j_{K}, p_{K} ) \} $ with $ | \mathcal{K} |_{c} = K  $, pointing to possible active elements of $ \mathbf{X} $. The estimations of the corresponding channel coefficients manifest as the arguments $ \widetilde{\mathbf{x}} = [ \widetilde{x}_{j_{1}, p_{1}}, \dots, \widetilde{x}_{j_{K}, p_{K}} ]^{T} \in \mathbb{C}^{K} $ that reduce most of the residual
\begin{align}
	\mathbf{R} = \mathbf{Y} - \sum_{k = 1}^{K} \widetilde{x}_{j_{k}, p_{k}} \mathbf{a}_{j_{k}} \mathbf{b}_{p_{k}}^{T},
\end{align}
which can be calculated as the least squares (LS) solution
\begin{align}
	\arg \min \limits_{\widetilde{\mathbf{x}}} \| \mathbf{R} \|_{F}^{2} = \operatorname{tr} ( \mathbf{R}^{H} \mathbf{R} ) = \| \mathbf{Y} \|_{F}^{2} - \mathbf{n}^{H} \widetilde{\mathbf{x}} - \mathbf{n}^{T} \widetilde{\mathbf{x}}^{*} + \widetilde{\mathbf{x}}^{H} \mathbf{\Lambda} \widetilde{\mathbf{x}}, \label{trace}
\end{align}
where $ \mathbf{n} = \left[ \mathbf{b}_{p_{1}}^{H} \mathbf{Y}^{T} \mathbf{a}_{j_{1}}^{*}, \dots, \mathbf{b}_{p_{K}}^{H} \mathbf{Y}^{T} \mathbf{a}_{j_{K}}^{*} \right]^{T} $ and the $ (k, k') $-th element of $ \mathbf{\Lambda} \in \mathbb{C}^{K \times K } $ can be expressed as $ \left[ \mathbf{\Lambda} \right]_{k, k'} = ( \mathbf{a}_{j_{k}}^{H} \mathbf{a}_{j_{k'}} ) ( \mathbf{b}_{p_{k}}^{H} \mathbf{b}_{p_{k'}} ) $. The optimal $ \widetilde{\mathbf{x}} $ that minimizes $ \| \mathbf{R} \|_{F}^{2} $ is certainly the value of $ \widetilde{\mathbf{x}} $ when the derivative of~\eqref{trace} equals to zero as the function is convex, i.e., $ \frac{\partial \operatorname{tr} \left( \mathbf{R}^{H} \mathbf{R} \right)}{\partial \widetilde{\mathbf{x}}} = \mathbf{\Lambda}^{H} \widetilde{\mathbf{x}} - \mathbf{n} = \mathbf{0} $. Noticing that $ \mathbf{\Lambda} = \mathbf{\Lambda}^{H}  $, the solution is given by $ \widetilde{\mathbf{x}} = \left( \mathbf{\Lambda} \right)^{-1} \mathbf{n} $.

However, only some of the intermediate estimations are retained for residua renewal. First, based on the AD criterion of S-CoSaMP, only $ K_{\mathrm{a}} $ rows of the estimated $ \mathbf{X}_{\mathcal{K}^{\mathrm{tur}}, :} $ are determined active. Then, 2D-CoSaMP selects at most $ R $ significant components among these rows to estimate $ \mathbf{X} $. Subsequently, the residual is renewed by removing the selected atoms from $ \mathbf{Y} $ to end one iteration. Based on the intuition of SIC, as the impacts of the stronger elements are subtracted from the received signal, we hope that the weaker elements can be retrieved from the residual in the next iteration.

\begin{algorithm}[!t]
	\caption{Turbo-CoSaMP}
	\label{turbocosamp}
	\begin{spacing}{1.2}
		\textbf{Input:} Observation $ \mathbf{Y} $, measurement matrices $ \mathbf{A}, \mathbf{B} $, sparsity levels $ K_{\mathrm{a}}, R $, target power $ \tau $ \\
		\textbf{Initialize:} $ \widehat{\mathbf{X}}( 0 ) = \mathbf{0}_{2^{J} \times P} $, row support set $ \mathcal{R}_{\mathrm{r}}( 0 ) = \emptyset $, signal support set $ \mathcal{R}( 0 ) = \emptyset $, residual $ \mathbf{R}( 0 ) = \mathbf{Y} $, $ t = 0 $
		\begin{algorithmic}[1]
			\REPEAT
			\STATE \textbf{\% AD via S-CoSaMP}
			\STATE Compute the signal proxy $ \mathbf{P}_{\mathrm{r}}( t ) = \mathbf{A}^{H} \mathbf{R}( t - 1 ) $
			\STATE Identify the row support set $ \mathcal{S}_{\mathrm{r}}( t ) = \arg \max \limits_{\mathcal{S}_{\mathrm{r}} \subseteq [ 2^{J} ], | \mathcal{S}_{\mathrm{r}} |_{c} = 2 K_{\mathrm{a}}} \sum_{j \in \mathcal{S}_{\mathrm{r}}} \| \mathbf{p}_{\mathrm{r}_{j, :}}( t ) \|_{2}^{2} $
			\STATE Merge the row support set $ \mathcal{K}_{\mathrm{r}}( t ) = \mathcal{R}_{\mathrm{r}}( t - 1 ) \cup \mathcal{S}_{\mathrm{r}}( t ) $
			\STATE \textbf{\% CE via 2D-CoSaMP}
			\STATE Compute the signal proxy $ \mathbf{P}( t ) = \mathbf{A}_{\mathcal{K}_{\mathrm{r}}( t )}^{H} \mathbf{R}( t - 1 ) \mathbf{B}^{*} $
			\STATE Identify the signal support set $ \mathcal{S}( t ) = \arg \max \limits_{\mathcal{S} \subseteq \mathcal{F}, | \mathcal{S} |_{c} = 2 R} \sum_{( j, p ) \in \mathcal{S}} | p_{j, p}( t ) | $, where $ \mathcal{F} = \{ ( j, p ):j \in [ 2^{J} ], p \in [ P ] \} $
			\STATE Merge the signal support set $ \mathcal{K}( t ) = \mathcal{R}( t - 1 ) \cup \mathcal{S}( t ) $
			\STATE LS estimation $ \widetilde{\mathbf{x}}( t ) = \arg \min \limits_{\widetilde{\mathbf{x}}} \| \mathbf{Y} - \sum_{( j, p ) \in \mathcal{K}( t )} \widetilde{x}_{j, p} \mathbf{a}_{j} \mathbf{b}_{p}^{T} \|_{F}^{2} $
			\STATE Screen $ \mathcal{K}_{\mathrm{r}}( t ) $ by retaining the indices of $ K_{\mathrm{a}} $ rows with the largest energy to obtain the row support set $ \mathcal{R}_{\mathrm{r}}( t ) $
			\STATE Screen $ \mathcal{K}( t ) $ by retaining the indices of the $ R $ largest $ | \widetilde{x} | $ within the rows indexed by $ \mathcal{R}_{\mathrm{r}}( t ) $ to obtain the signal support set $ \mathcal{R}( t ) $
			\STATE $ \forall ( j, p ) \in \mathcal{R}( t ) $: $ \widehat{x}_{j, p}( t ) = \widetilde{x}_{j, p}( t ) $
			\STATE $ \forall ( j, p ) \in \mathcal{F} \setminus \mathcal{R}( t ) $: $ \widehat{x}_{j, p}( t ) = 0 $
			\STATE Update the residual $ \mathbf{R}( t ) = \mathbf{Y} - \sum_{( j, p ) \in \mathcal{R}( t )} \widehat{x}_{j, p} \mathbf{a}_{j} \mathbf{b}_{p}^{T} $
			\UNTIL $ \| \mathbf{R}( t ) \|_{F}^{2} \leqslant \tau^{2} $ 
		\end{algorithmic}
		\textbf{Output:} Signal estimation $ \widehat{\mathbf{X}}( t ) $
	\end{spacing}
\end{algorithm}

In general, we arrange S-CoSaMP for AD and 2D-CoSaMP for CE to work in a turbo manner, and halt the iteration when the residual power is reduced to a certain value $ \tau $. The corresponding detection scheme, termed \emph{Turbo-CoSaMP}, is summarized in Algorithm~\ref{turbocosamp}. Specifically, we first introduce S-CoSaMP to perform AD by energy detection based on the correlation computation between each row and the sensing matrix (see line~3, Algorithm~\ref{turbocosamp}), which leverages the shared support across elements of the near-field spatial domain channel. Then, 2D-CoSaMP leverages the near-field channel sparsity in the polar domain, which is represented by the proposed overcomplete channel dictionary in Section III, to reduce the scale of the CE problem. In particular, the sparse channel elements are identified based on a correlation calculation from the matrix sketch to manage the computational complexity (see line~7, Algorithm~\ref{turbocosamp}), which motivates from the RIP analysis of the overcomplete dictionary. We cooperate S-CoSaMP and 2D-CsSaMP to work iteratively, such that S-CoSaMP offers a fast selection strategy for 2D-CoSaMP to narrow the possible range of active coordinates, and the latter captures the extra sparsity in the polar domain channel to reduce the required block length for effective detection. This leads to systematic improvement in JADCE compared with applying either S-CoSaMP or 2D-CoSaMP independently.

Finally, we set an appropriate threshold $ \upsilon $ and make a hard decision on the support of codewords for AD. The index set of active codewords is given by
\begin{align}
	\mathcal{A}( s ) = \left\lbrace j : \left\| \widehat{\mathbf{x}}_{j, :}( s ) \right\|_{2}^{2} > \upsilon, j \in \left[ 2^{J} \right] \right\rbrace \label{L}, s \in [ S ],
\end{align}
where $ \widehat{\mathbf{x}}_{j, :}( s ) $ is the $ j $-th row of estimation $ \widehat{\mathbf{X}} $ at the $ s $-th slot.

\subsubsection{Complexity Analysis}

We analyze the computational complexity of Turbo-CoSaMP within one iteration. In general, support identification and fusion can be computed in $ \mathcal{O}( L_{r} ) $ time with $ L_{r} $ denoting the size of the search space. With respect to S-CoSaMP, calculating the proxy $ \mathbf{P}_{\mathrm{r}} $ yields the complexity of $ \mathcal{O}( 2^{J} N M ) $. As for 2D-CoSaMP, the proxy $ \mathbf{P} $ is computed with complexity $ \mathcal{O}( N M ( K_{\mathrm{a}} + P ) ) $.  Then, generating vector $ \mathbf{n} $ and matrix $ \mathbf{\Lambda} $ require $ \mathcal{O}( R ( N M + N ) ) $ and $ \mathcal{O}( R^{2} ( N + M ) ) $ computations, respectively, and calculating the inverse of $ \mathbf{\Lambda} $ and matrix-to-vector multiplication $ \mathbf{\Lambda}^{-1} \mathbf{n}^{*} $ takes $ \mathcal{O}( R^{3} ) $ and $ \mathcal{O}( R^{2} ) $ operations, respectively. Hence, the overall cost of the LS estimation is $ \mathcal{O}( R N M ) $. Finally, the residual is renewed with complexity $ \mathcal{O}( R N M ) $. Empirical observations suggest that the matrix-to-matrix multiplications involved in the proxy calculation dominate the overall complexity of Turbo-CoSaMP, which is $ \mathcal{O}( N M \max \{ 2^{J}, K_{\mathrm{a}} + P \} ) $.

\subsubsection{Convergence Guarantee}

We present the following theorem to claim that the stopping criterion leads to a guarantee about the mean-squared error (MSE) of the final signal estimation.
\begin{theorem}
	Let $ \mathbf{X} $ be a $ R $-sparse matrix and $ \widehat{\mathbf{X}} $ is the $ R $-sparse approximation of $ \mathbf{X} $. Let $ \mathbf{Y} = \mathbf{AXB}^{T} + \mathbf{W} \in \mathbb{C}^{N \times M} $ where each entry of $ \mathbf{W} $ is generated independently from the distribution $ \mathcal{CN}( 0, \sigma^{2} ) $. $ \mathbf{R} = \mathbf{Y} - \mathbf{A} \widehat{\mathbf{X}} \mathbf{B}^{T} $ is the residual. Assume that the measurement matrices $ \mathbf{A} $ and $ \mathbf{B} $ satisfy the RIP with $ \delta_{4 R} \leqslant \varepsilon $. When the stopping criterion is satisfied, it holds that
	\begin{align}
		\| \widehat{\mathbf{X}} - \mathbf{X} \|_{F} \leqslant \dfrac{\tau + \sqrt{N M} \sigma}{\sqrt{1 - \varepsilon}}. \label{theorem1}
	\end{align}
\end{theorem}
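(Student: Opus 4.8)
The plan is to unwind the definition of the residual, use the triangle inequality to peel off the noise, and then invoke the RIP lower bound. When the stopping criterion is triggered we have $ \| \mathbf{R} \|_{F} \leqslant \tau $. Substituting $ \mathbf{Y} = \mathbf{A X B}^{T} + \mathbf{W} $ into $ \mathbf{R} = \mathbf{Y} - \mathbf{A} \widehat{\mathbf{X}} \mathbf{B}^{T} $ gives $ \mathbf{R} = \mathbf{A} ( \mathbf{X} - \widehat{\mathbf{X}} ) \mathbf{B}^{T} + \mathbf{W} $, hence $ \mathbf{A} ( \mathbf{X} - \widehat{\mathbf{X}} ) \mathbf{B}^{T} = \mathbf{R} - \mathbf{W} $, and by the triangle inequality $ \| \mathbf{A} ( \mathbf{X} - \widehat{\mathbf{X}} ) \mathbf{B}^{T} \|_{F} \leqslant \| \mathbf{R} \|_{F} + \| \mathbf{W} \|_{F} \leqslant \tau + \| \mathbf{W} \|_{F} $.

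Next I would observe that $ \mathbf{X} - \widehat{\mathbf{X}} $ is the difference of two $ R $-sparse matrices and is therefore at most $ 2R $-sparse, and a fortiori $ 4R $-sparse. Consequently the hypothesis $ \delta_{4R} \leqslant \varepsilon $ (together with the monotonicity of RIP constants, $ \delta_{2R} \leqslant \delta_{4R} $) allows us to apply the lower bound in~\eqref{rip} with $ \widetilde{\mathbf{X}} = \mathbf{X} - \widehat{\mathbf{X}} $, giving $ ( 1 - \varepsilon ) \| \mathbf{X} - \widehat{\mathbf{X}} \|_{F}^{2} \leqslant \| \mathbf{A} ( \mathbf{X} - \widehat{\mathbf{X}} ) \mathbf{B}^{T} \|_{F}^{2} \leqslant ( \tau + \| \mathbf{W} \|_{F} )^{2} $. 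Taking square roots yields $ \| \mathbf{X} - \widehat{\mathbf{X}} \|_{F} \leqslant ( \tau + \| \mathbf{W} \|_{F} ) / \sqrt{1 - \varepsilon} $. (Only $ \delta_{2R} $ is actually needed here; the stronger $ \delta_{4R} $ assumption is what the greedy-pursuit convergence analysis requires, so it is natural to state it this way.)

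It then remains to control the noise term. Since $ \mathbf{W} $ has $ N M $ i.i.d.\ entries of variance $ \sigma^{2} $, we have $ \mathbb{E} [ \| \mathbf{W} \|_{F}^{2} ] = N M \sigma^{2} $, and because $ \| \mathbf{W} \|_{F}^{2} / \sigma^{2} $ is a $ \chi^{2} $-type quantity with $ 2 N M $ degrees of freedom it concentrates sharply about its mean for the large dimensions typical of massive access; thus $ \| \mathbf{W} \|_{F} \leqslant \sqrt{N M} \sigma $ (in expectation, or with overwhelming probability). Substituting this into the previous display gives exactly~\eqref{theorem1}.

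The only delicate point is this last noise bound: $ \| \mathbf{W} \|_{F} $ is a random quantity and $ \sqrt{N M} \sigma $ is its root-mean-square value, so strictly speaking the statement should be interpreted either in expectation or in the high-probability sense afforded by the concentration argument above. Everything else in the proof is a two-line consequence of the triangle inequality and the RIP, so I expect the write-up to be brief, with the bulk of any subtlety confined to how precisely one wants to phrase the probabilistic noise estimate.
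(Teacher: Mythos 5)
Your proof is correct and takes essentially the same route as the paper's: observe that $\mathbf{X}-\widehat{\mathbf{X}}$ is $2R$-sparse, apply the RIP lower bound together with the monotonicity $\delta_{2R}\leqslant\delta_{4R}\leqslant\varepsilon$, and bound $\|\mathbf{A}(\mathbf{X}-\widehat{\mathbf{X}})\mathbf{B}^{T}\|_{F}=\|\mathbf{R}-\mathbf{W}\|_{F}\leqslant\|\mathbf{R}\|_{F}+\|\mathbf{W}\|_{F}$ by the triangle inequality. Your caveat about the noise term is in fact more careful than the paper, which silently replaces the random quantity $\|\mathbf{W}\|_{F}$ by its root-mean-square value $\sqrt{NM}\,\sigma$, so the stated bound likewise holds only in expectation or with high probability.
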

\begin{proof}
	The matrix $ \widehat{\mathbf{X}} - \mathbf{X} $ is $ 2 R $-sparse, i.e., it has at most $ 2 R $ nonzero entries. Thus, we have
	\begin{align}
		\sqrt{1 - \delta_{2 R}} \| \widehat{\mathbf{X}} - \mathbf{X} \|_{F} \overset{(\mathrm{a})}{\leqslant} \| \widetilde{\mathbf{A}} ( \widehat{\mathbf{X}} - \mathbf{X} ) \widetilde{\mathbf{B}}^{T} \|_{F} = \| \mathbf{R} - \mathbf{W} \|_{F} \leqslant \| \mathbf{R} \|_{F} + \| \mathbf{W} \|_{F},
	\end{align}
	where (a) is attributed to the RIP~\eqref{rip}. With $ \delta_{2 R} \leqslant \delta_{4 R} \leqslant \varepsilon $, we obtained~\eqref{theorem1}.
\end{proof}


\subsection{Enhanced Signal Reconstruction via Off-Grid CS}

Relying on the discrete polar domain sampling, the estimated near-field channel in Turbo-CoSaMP can be regarded as a linear combination of paths from several fixed positions (i.e., an ``on-grid'' estimation). This allows us to gain a quick knowledge of the dominant channel components with low computational complexity. However, no matter how fine we draw the grid, the mismatch still exists between the continuously distributed physical path and the discrete virtual path. This so-called off-grid or basis mismatch effect leads to the unpredictable energy spread between sampled array responses, which would cause the degeneration of the proposed on-grid reconstruction algorithm.

To overcome such limitation, parameters related to the near-field channel, i.e., the DoA $ \theta $ and the distance $ d $, should be estimated over the continuous space. Recall that for each atom of the form $ \mathbf{a}_{j} \mathbf{e}_{\mathrm{n}}^{T}( \theta, d ) $, the greedy-pursuit-based algorithm aims at minimizing the residual power $ \| \mathbf{R} - g \mathbf{a}_{j} \mathbf{e}_{\mathrm{n}}^{T}( \theta, d ) \|_{2}^{2} $ though the channel gain $ g $. This is equivalent to minimizing the following function
\begin{align}
	E( g, j, \theta, d ) = 2 \Re \left\lbrace g \mathbf{a}_{j}^{H} \mathbf{R} \mathbf{e}_{\mathrm{n}}^{*}( \theta, d ) \right\rbrace - | g |^{2} \| \mathbf{e}_{\mathrm{n}}( \theta, d ) \|_{2}^{2}.
\end{align}
However, considering the continuous nature of $ ( \theta, d ) $, it is prohibitive to exhaustively search the optimal $ ( g, j, \theta, d ) $ over the whole feasible region. Fortunately, the proposed Turbo-CoSaMP algorithm has already provided the active codeword index $ \hat{j} $ and a rough estimate $ ( \hat{\theta}, \hat{d} ) $ via a grid-based search. Also, the corresponding gain coefficient $ \hat{g} $ can be obtained by a LS estimation. Based on such a starting point, we adopt a Newton update step to emulate the pursuit over the continuum, expressed as
\begin{align}
	\left[ \hat{\theta}^{\prime}, \hat{d}^{\prime} \right]^{T} = \left[ \hat{\theta}, \hat{d} \right]^{T} - \ddot{\mathbf{E}}( \hat{g}, \hat{j}, \hat{\theta}, \hat{d} )^{-1} \dot{\mathbf{E}}( \hat{g}, \hat{j}, \hat{\theta}, \hat{d} ), \label{nstep}
\end{align}
where $ \dot{\mathbf{E}}( \hat{g}, \hat{j}, \hat{\theta}, \hat{d} ) $ is the Jacobian matrix expressed as
\begin{align}
	\dot{\mathbf{E}}( \hat{g}, \hat{j}, \hat{\theta}, \hat{d} ) = \left[ \frac{\partial E}{\partial \hat{\theta}}, \frac{\partial E}{\partial \hat{d}} \right]^{T},
\end{align}
where
\begin{align}
	\frac{\partial \mathbf{E}}{\partial \bar{x}} = 2 \Re \left\{ \hat{g} \left( \mathbf{R}^{T} \mathbf{a}_{\hat{j}}^{*} - \hat{g} \mathbf{e}_{\mathrm{n}}( \hat{\theta}, \hat{d} ) \right)^{H} \frac{\partial \mathbf{e}_{\mathrm{n}}( \hat{\theta}, \hat{d} )}{\partial \bar{x}} \right\},
\end{align}
with $ \bar{x} \in \{ \hat{\theta}, \hat{d} \} $,
and $ \ddot{\mathbf{E}}( \hat{g}, \hat{j}, \hat{\theta}, \hat{d} ) $ is the Hessian matrix expressed as
\begin{align}
	\ddot{\mathbf{E}}( \hat{g}, \hat{j}, \hat{\theta}, \hat{d} ) = \begin{bmatrix} \frac{\partial^{2} E}{\partial \hat{\theta}^{2}} & \frac{\partial^{2} E}{\partial \hat{\theta} \partial \hat{d}} \\\frac{\partial^{2} E}{\partial \hat{d} \partial \hat{\theta}} & \frac{\partial^{2} E}{\partial \hat{d}^{2}} \end{bmatrix},
\end{align}
where
\begin{align}
	\frac{\partial^{2} \mathbf{E}}{\partial \bar{x}_{1} \partial \bar{x}_{2}} = 2 \Re \left\{ \hat{g} \left( \mathbf{R}^{T} \mathbf{a}_{\hat{j}}^{*} - \hat{g} \mathbf{e}_{\mathrm{n}}( \hat{\theta}, \hat{d} ) \right)^{H} \frac{\partial^{2} \mathbf{e}_{\mathrm{n}}( \hat{\theta}, \hat{d} )}{\partial \bar{x}_{1} \partial \bar{x}_{2}} - | \hat{g} |^{2} \frac{\partial \mathbf{e}_{\mathrm{n}}^{H}( \hat{\theta}, \hat{d} )}{\partial \bar{x}_{2}} \frac{\partial \mathbf{e}_{\mathrm{n}}( \hat{\theta}, \hat{d} )}{\partial \bar{x}_{1}} \right\},
\end{align}
with $ \bar{x}_{1}, \bar{x}_{2} \in \{ \hat{\theta}, \hat{d} \} $. The proceeding step~\eqref{nstep} is repeated for $ T_{\mathrm{N}} $ times, which can be regarded as a coordinate descent optimization over a continuum. We accept the update step if the function $ E $ is locally concave, i.e., $ \operatorname{det} ( \ddot{\mathbf{E}} ) > 0 $ and the $ ( 1, 1 ) $-th element of $ \ddot{\mathbf{E}} $ is lower than $ 0 $. It also requires that the cost $ \| \mathbf{a}_{\hat{j}}^{H} \mathbf{R} \mathbf{e}_{\mathrm{n}}^{*}( \hat{\theta}', \hat{d}' ) \|_{2}^{2} > \| \mathbf{a}_{\hat{j}}^{H} \mathbf{R} \mathbf{e}_{\mathrm{n}}^{*}( \hat{\theta}, \hat{d} ) \|_{2}^{2} $. This ensures that the refinement leads to the maximum decline of the residual.

\begin{algorithm}[!t]
	\caption{N-Turbo-CoSaMP}
	\label{nturbocosamp}
	\begin{spacing}{1.2}
		\textbf{Input:} Observation $ \mathbf{Y} $, measurement matrices $ \mathbf{A}, \mathbf{B} $, sparsity levels $ K_{\mathrm{a}}, R $, target power $ \tau $, number of Newtonized refinement round $ T_{\mathrm{l}} $ and $ T_{\mathrm{c}} $ \\
		\textbf{Initialize:} $ \widehat{\mathbf{Z}}( 0 ) = \mathbf{0}_{2^{J} \times M} $, row support set $ \mathcal{R}_{\mathrm{r}}( 0 ) = \emptyset $, signal support set $ \mathcal{R}( 0 ) = \emptyset $, residual $ \mathbf{R}( 0 ) = \mathbf{Y} $, $ t = 0 $
		\begin{algorithmic}[1]
			\REPEAT
			\STATE Run lines 1-8, Algorithm~\ref{turbocosamp} to obtain the row support set $ \mathcal{K}_{\mathrm{r}}( t ) $ and the signal support set $ \mathcal{S}( t ) $.
			\STATE Estimate $ \mathbf{g}( t ) = \arg \min \limits_{\mathbf{g}} \| \mathbf{Y} - \sum_{( j, p ) \in \mathcal{S}( t )} g_{j, p} \mathbf{a}_{j} \mathbf{b}_{p}^{T} \|_{F}^{2} $ and let $ \mathcal{P}( t ) = \{ ( g, j, \theta, d ) : ( j, p ) \in \mathcal{S}( t ), \mathbf{e}( \theta, d ) = \mathbf{b}_{p} \} $
			\STATE \textbf{\% Local Newtonized refinement}
			\STATE $ \forall ( g, j, \theta, d ) \in \mathcal{P}( t ) $: update $ ( \theta, d ) $ via the Newton refinement~\eqref{nstep} for $ T_{\mathrm{l}} $ times
			\STATE Merge the parameter set $ \mathcal{K}( t ) = \mathcal{R}( t - 1 ) \cup \mathcal{P}( t ) $
			\STATE \textbf{\% Cyclic Newtonized refinement}
			\FOR{$ t_{\mathrm{c}} = 1, \dots, T_{\mathrm{c}} $}
			\STATE $ \forall ( g', j', \theta', d' ) \in \mathcal{K}( t ) $: update $ ( \theta', d' ) $ via the Newton refinement~\eqref{nstep} based on the residual $ \mathbf{Y} - \sum_{( g, j, \theta, d ) \in \mathcal{K}( t ) \setminus \{ ( g', j', \theta', d' ) \}} g \mathbf{a}_{j} \mathbf{e}_{\mathrm{n}}^{T}( \theta, d ) $
			\ENDFOR
			\STATE LS estimation $ \mathbf{g}'( t ) = \arg \min \limits_{\mathbf{g}'} \| \mathbf{Y} - \sum_{\mathcal{K}( t )} g' \mathbf{a}_{j'} \mathbf{e}_{\mathrm{n}}^{T}( \theta', d' ) \|_{F}^{2} $
			\STATE Screen $ \mathcal{K}_{\mathrm{r}}( t ) $ by retaining the indices of $ K_{\mathrm{a}} $ rows with the largest energy to obtain the row support set $ \mathcal{R}_{\mathrm{r}}( t ) $
			\STATE Screen $ \mathcal{K}( t ) $ by retaining the elements related the to the $ R $ largest $ | g | $ within the rows indexed by $ \mathcal{R}_{\mathrm{r}}( t ) $ to obtain the signal support set $ \mathcal{R}( t ) $
			\STATE $ \forall j \in \mathcal{R}_{\mathrm{r}}( t ) $: $ \widehat{\mathbf{z}}_{j}( t ) = \sum_{( g, j, \theta, d ) \in \mathcal{R}( t )} g \mathbf{a}_{j} \mathbf{e}_{\mathrm{n}}^{T}( \theta, d ) $
			\STATE $ \forall j \in [ 2^{J} ] \setminus \mathcal{R}_{\mathrm{r}}( t ) $: $ \widehat{\mathbf{z}}_{j}( t ) = \mathbf{0} $
			\STATE Update the residual $ \mathbf{R}( t ) = \mathbf{Y} - \sum_{( g, j, \theta, d ) \in \mathcal{R}( t )} g \mathbf{a}_{j} \mathbf{e}_{\mathrm{n}}^{T}( \theta, d ) $
			
			\UNTIL $ \| \mathbf{R}( t ) \|_{F}^{2} \leqslant \tau^{2} $
		\end{algorithmic}
		\textbf{Output:} Signal estimation $ \widehat{\mathbf{Z}}( t ) $
	\end{spacing}
\end{algorithm}

After the newly refined parameters are added to the intermediate set, the Newtonized approach also diverges from the original Turbo-CoSaMP algorithm in introducing a cyclic refinement across the globe set. In particular, the Newtonized coordinate descent step given in~\eqref{nstep} is carried out on each atom based on the residual reached by removing the contribution of the other atoms. Considering that the original channel parameters are locally estimated. Thus, the essence of the cyclic process can be regarded as a feedback mechanism; the newly detected atoms provide information to remove CE errors introduced by previous outcomes and vice versa.

It is worth mentioning that N-Turbo-CoSaMP promotes the sparsity of the near-field channel from a finite domain to a continuous domain capturing the two crucial near-field channel parameters, i.e., the DoA and the distance. The correspondingly proposed Newtonized step in lines 5 and 9, Algorithm~\ref{nturbocosamp}, emulates the pursuit simultaneously over the two-dimensional angle-distance space. This reveals a significant difference from the off-grid estimation in \cite{MRM16} which focuses on only one single target and that in \cite{CD22} where the DoA and the distance are separately refined with a gradient-based local search.

\subsubsection{Complexity and Convergence Analysis}

Compared with Turbo-CoSaMP, the additional refinement steps take the total cost of $ \mathcal{O}( R N M ) $. Thus, the computational complexity of N-Turbo-CoSaMP is on the same level as Turbo-CoSaMP.

Now, we prove the convergence of N-Turbo-CoSaMP. For parameters $ ( g^{t}, j^{t}, \theta^{t}, d^{t} ) $ passed from the $ t $-th iteration, we always have $ \| g^{t} \mathbf{a}_{j^{t}} \mathbf{e}_{\mathrm{n}}^{T}( \theta^{t}, d^{t} ) \|_{F}^{2} \leqslant \| g^{t + 1} \mathbf{a}_{j^{t + 1}} \mathbf{e}_{\mathrm{n}}^{T}( \theta^{t + 1}, d^{t + 1} ) \|_{F}^{2} $ after the cyclic refinement due to the acceptable conditions of the coordinate descent step. Then, CoSaMP retains only $ R $ atoms with the largest gain among recent estimates and previous outcomes. Therefore, $ \| \mathbf{R}( t + 1 ) \|_{F}^{2} < \| \mathbf{R}( t ) \|_{F}^{2} $ always holds. As the residual power is lower bounded, N-Turbo-CoSaMP is guaranteed to converge after sufficient rounds of iteration.

\section{Message Stitching via Channel Clustering}

\subsection{$ K $-Medoids for Clustering}

After retrieving active codewords and channels from all the transmission intervals, the decoder aims to reconstruct the original messages by concatenating the codewords from each transmitter in sequence. Since the codewords across all the intervals are coupled by the similarity of their corresponding channels, we propose to conduct the message stitching process by grouping slot-distributed channels into clusters.

The $ K $-medoids algorithm~\cite{PJ09}, as a partitioning-based clustering method, is well adapted to perform non-hierarchical classification. It proceeds by the following steps.
\begin{itemize}
	\item
	\textbf{Step 1:} Randomly select $ K_{\mathrm{a}} $ objects as the cluster medoids (centers).
	\item
	\textbf{Step 2:} Assign a given data instance to the nearest medoid.
	\item
	\textbf{Step 3:} Update the new medoid of each cluster as the data point that minimizes the total distance to other components in the cluster.
	\item
	Repeat Steps 2, 3 until there is no change in the assignments.
\end{itemize}
Different from the $ K $-means algorithm~\cite{XWA22}, the $ K $-medoids chooses the centrally located data instance instead of the means of all the objects to represent each cluster, making it less sensitive to potential outliers, e.g., contaminated channel vectors caused by codeword collision.

\subsection{Channel Clustering Constraints}

In the practical application of clustering-based decoding, each retrieved channel as a data instance also comes in with some background knowledge, i.e., the sequential order of its belonging interval. However, the conventional $ K $-medoids algorithm fails to explore that information to constrain the cluster placement of the instances. Therefore, we propose to incorporate the background knowledge in the form of instance-level constraints and impose a balanced clustering restriction on the channel assignment step. An unbalanced restriction is further investigated to tackle the situation of codeword reuse.

\subsubsection{Balanced Channel Clustering}

For now, we consider an ideal case that $ K_{\mathrm{a}} $ different codewords are emitted by the $ K_{\mathrm{a}} $ active users at every slot. Intrinsically, each cluster should collect no more than one channel from a given slot. To fulfill such a constraint, we carry out the assignment step of $ K $-medoids on a per-slot basis with $ K_{\mathrm{a}} $ channels allocated to $ K_{\mathrm{a}} $ clusters simultaneously. Suppose that the channels and cluster medoids stand for the factors on two sides of a bipartite graph and the distance between each channel and medoid represents the weight linking the corresponding factors. Then, grouping $ K_{\mathrm{a}} $ channels to $ K_{\mathrm{a}} $ clusters with the minimum intra-group distance is equivalent to the minimum bipartite matching problem, which can be efficiently solved by the Hungarian algorithm~\cite{K55} with computational complexity $ \mathcal{O}( K_{\mathrm{a}}^{3} ) $.

\subsubsection{Unbalanced Channel Clustering Handing Codeword Collision}

URA can be regarded as a competitive process for common codewords. Thus, codeword collision would inevitably happen when two or more users share the same codeword, leading to unbalanced channel-medoid pairs. In the following discussion, we assume that the probability of three or more users choosing the same codeword is negligibly low given a proper codebook size fitting $ K_{\mathrm{a}} $~\cite{SBM21}. Considering a circumstance that only a total number of $ K_{s} < K_{\mathrm{a}} $ codewords are picked by users, the equivalent channel of each reused codeword is the superposition of conflicting users' channels, which generally remains large distances to all the cluster centers. Therefore, we identify the $ K_{\mathrm{a}} - K_{s} $ channels with the largest value of the minimum distance to all the center points as contaminated ones; they are repeatedly added to the channel set for clustering. Thus, channel assignment returns to the balanced case.

\subsection{Overall Algorithm}

\begin{algorithm}[!t]
	\caption{Modified $ K $-Medoids Algorithm for Channel Clustering}
	\label{Kmedoids}
	\begin{spacing}{1.2}
		\textbf{Input:} Data sets $ \{ \widehat{\mathbf{h}}^{( i )}( s ) : i \in [ K_{s} ] \}, s \in [ S ] $
		\begin{algorithmic}[1]
			\STATE Choose $ \widehat{K}_{\mathrm{a}} = \max \{ K_{s} : s \in [ S ] \} $
			\STATE $ \forall k \in [ \widehat{K}_{\mathrm{a}} ] $: Set $ \mathbf{n}_{k}(0, S)  = \widehat{\mathbf{h}}^{( k )}( s ) $ with $ s = \arg \max \limits_{s'} \{ K_{s'} : s' \in [ S ] \} $, $ \mathcal{C}_{k}( 0 ) = \emptyset $
			\STATE $ t = 0 $
			\REPEAT
			\STATE $ t = t + 1 $
			\STATE $ \forall k $: Set $ \mathbf{n}_{k}(t, 0) = \mathbf{n}_{k}(t - 1, S) $
			\FOR{$ s = 1, 2, \dots, S $}
			\STATE Compute cost matrix $ \mathbf{C} $ with $ [ \mathbf{C} ]_{i, k} = \| \widehat{\mathbf{h}}^{( i )}( s ) - \mathbf{n}_{k}(t, s - 1) \|_{2} $
			\IF{$ K_{s} < \widehat{K}_{\mathrm{a}} $}
			\STATE Reshape $ \mathbf{C} $ by adding $ \widehat{K}_{\mathrm{a}} - K_{s} $ rows with the largest minimum distance to all the medoids
			\ENDIF
			\STATE Perform the Hungarian algorithm with input $ \mathbf{C} $ and obtain the output $ \mathbf{V} $
			\STATE $ \forall k $: $ \mathcal{C}_{k}( t ) = \mathcal{C}_{k}( t - 1 ) \cup \{ \sum_{i = 1}^{K_{s}} v_{i, k} \widehat{\mathbf{h}}^{( i )}( s ) \} $
			\STATE $ \forall k $: Update $ \mathbf{n}_{k}(t, s) $ via~\eqref{mupdate}
			\ENDFOR
			\UNTIL{$ \mathcal{C}_{k}( t ) = \mathcal{C}_{k}( t - 1 ) $ for $ \forall k \in [ \widehat{K}_{\mathrm{a}} ] $}
		\end{algorithmic}
		\textbf{Output:} Partitioning of the data set
	\end{spacing}
\end{algorithm}

We denote the set of recovered channels at the $ s $-th slot by $ \{ \widehat{\mathbf{h}}^{( i )}( s ) : i \in [ K_{s} ] \} $ with $ K_{s} = | \mathcal{A}( s ) |_{c} $. $ \mathcal{C}_{k} $ represents the $ k $-th cluster and $ \mathbf{n}_{k} \in \mathbb{C}^{P} $ is the medoid of $ \mathcal{C}_{k} $. If the value of $ K_{\mathrm{a}} $ is unknown, its estimation $ \widehat{K}_{\mathrm{a}} $ can be decided by the largest value of $ K_{s}, s \in [ S ] $ and the channels from the corresponding slot are chosen as the initial cluster centers. In order to capture the element-wise difference between two vectors, we weight the channel-medoid distance by the Euclidean distance and then calculate a cost matrix $ \mathbf{C} \in \mathbb{C}^{K_{s} \times \widehat{K}_{\mathrm{a}}} $ with the $ (i,k) $-th entry $ [ \mathbf{C} ]_{i, k} = \| \widehat{\mathbf{h}}^{( i )}( s ) - \mathbf{n}_{k} \|_{2} $.

If $ K_{s} = \widehat{K}_{\mathrm{a}} $ is satisfied at the current slot, a balanced channel clustering is performed by solving the Hungarian algorithm with input $ \mathbf{C} $ a square matrix. The algorithm output $ \mathbf{V} \in \{0, 1\}^{K_{s} \times \widehat{K}_{\mathrm{a}}} $ is a binary matrix with exactly one nonzero element in each row and each column and the $ ( i, k ) $-th entry $ v_{i, k} = 1 $ implies that the $ i $-th channel belongs to the $ k $-th cluster. With each cluster allotted as $ \mathcal{C}_{k} = \mathcal{C}_{k} \cup \{ \sum_{i = 1}^{K_{s}} v_{i, k} \widehat{\mathbf{h}}^{( i )}( s ) \} $, the corresponding medoid is renewed as the centrally located component, i.e.,
\begin{align}
	\mathbf{n}_{k} = \arg \min \limits_{\widehat{\mathbf{h}}^{( i )}( s ) \in \mathcal{C}_{k}} \sum_{( i', s' ) \in \mathcal{I} \setminus \{ ( i, s ) \}} \left\| \widehat{\mathbf{h}}^{( i )}( s ) - \widehat{\mathbf{h}}^{( i' )}( s' ) \right\|_{2}, \label{mupdate}
\end{align}
where the set $ \mathcal{I} = \{ ( i', s' ) : \widehat{\mathbf{h}}^{( i' )}( s' ) \in \mathcal{C}_{k} \} $.

The unbalanced channel clustering takes a similar process, while the difference lies in that we shape $ \mathbf{C} $ to a square matrix by filling in $ \widehat{K}_{\mathrm{a}} - K_{s} $ rows with the largest minimum distance to all the medoids. Accordingly, the output of the Hungarian algorithm is resized by adding these rows to their original locations.

The overall algorithm is summarized in Algorithm~\ref{Kmedoids}, whose computational complexity is in the order of $ \mathcal{O}( S K_{\mathrm{a}}^{2} ( M + K_{\mathrm{a}} ) ) $. Note that the contaminated channels, caused by codeword collisions, would have almost no effect on the medoid update step as the $ K $-medoids determines each cluster center by the central component as in line~13, Algorithm~\ref{Kmedoids}, instead of the centroid of all the components. Also, the channel features of conflicting users, i.e., whether they are separable with distinctive DoAs and distances, have no effect on the method itself. We would like to mention that~\cite{XWA22} studies the message stitching process for URA regarding the far-field multipath channel, where $ K $-means is introduced for angular domain channel clustering. However, each data instance, even for the contaminated channels, should be included to update the cluster center as the centroid. To suppress the influence of ill-conditioned channels, one would have to cancel the interference of conflicting users' channels, which highly relies on the separability of their own DoAs.

\section{Simulation Results}

In this section, we provide simulation results to verify the performance of the proposed JADCE algorithms and uncoupled URA scheme for near-field communications. The considered carrier frequency in the system is $ 3 $ GHz and the BS is equipped with $ M = 128 $ antennas unless otherwise mentioned. $ K_{\mathrm{a}} = 100 $ active near-field users have their distances to the BS uniformly lie in the range of $ ( 30, 100 ) $ meters and DoAs uniformly distributed in $ ( -1, 1 ) $. We consider a non-line-of-sight (NLOS) propagation environment for the near-field channel with $ L = 2 $ paths~\cite{HJW20} and each path gain is generated independently from a complex normal distribution~\cite{KGW20}. The codebook $ \mathbf{A} $ is generated as a partial DFT matrix with $ J = 14 $. The received signal-to-noise ratio (SNR) is defined as $ \mathrm{SNR} = \| \mathbf{AXB}^{T} \|_{F}^{2} / \| \mathbf{W} \|_{F}^{2} $. In URA, the error event probability is described as the sum of the per-user probability of misdetection and the probability of false-alarm, expressed as
\begin{align}
	P_{\mathrm{e}} = \dfrac{1}{K_{\mathrm{a}}} \sum_{k \in [ K_{\mathrm{a}} ]} \mathbb{P} \left( \mathbf{m}_{k} \notin \mathcal{L} \right) + \dfrac{\left| \mathcal{L} \backslash \{ \mathbf{m}_{k}: k \in [ K_{\mathrm{a}} ] \} \right|_{c} }{\left| \mathcal{L} \right|_{c} },
\end{align}
where $ \mathcal{L} $ is the decoded message list. We measure the CE accuracy by the NMSE with respect to the spatial domain channel, expressed as
\begin{align}
	\mathrm{NMSE} = \dfrac{\sum_{j \in \mathcal{A}_{s}} \| \widehat{\mathbf{z}}_{j, :} - \mathbf{z}_{j, :} \|_{2}^{2}}{\sum_{j \in \mathcal{A}_{s}} \| \mathbf{z}_{j, :} \|_{2}^{2}}, s \in [ S ].
\end{align}
Furthermore, for the proposed CoSaMP-based algorithms, the sparsity levels is determined as $ R = 400 $. We set $ T_{\mathrm{l}} = T_{\mathrm{c}} = 3 $ for the Newtonized refinement and $ \tau^{2} = \frac{\| \mathbf{Y} \|_{F}^{2}}{5 ( \mathrm{SNR} + 1 )} $ for the stopping criterion.

\subsection{Performance of the Near-Field Channel Sampling Method}

We consider various channel sampling methods to generate the matrix $ \mathbf{B} $:
\begin{itemize}
	\item
	The angular domain sampling by the spatial Fourier transform, i.e., $ \mathbf{B} = \mathbf{U} $ as in~\eqref{dftmtx}.
	\item
	The polar domain sampling developed in~\cite{CD22} where DoAs and distances are sampled separately. We set the minimum distance $ \rho_{\mathrm{min}} = D_{F} $ and the control parameter $ \beta_{\Delta} = 1.2 $, as specified in~\cite{CD22}, leading to a maximum coherence of value $ 0.7908 $ between columns. We denote the corresponding channel dictionary by $ \mathbf{B}_{\beta} $, which has dimensions $ 128 \times 768 $.
	\item
	The proposed polar domain sampling where the intervals of DoAs and distance rings are jointly decided based on the maximum permissible coherence level. For this method, we fix $ \gamma = 0.5816 $, resulting in a channel dictionary $ \mathbf{B}_{\gamma} $ with dimensions $ 128 \times 741 $.
\end{itemize}

\begin{figure}[t]
	\centering
	\includegraphics[width=7cm]{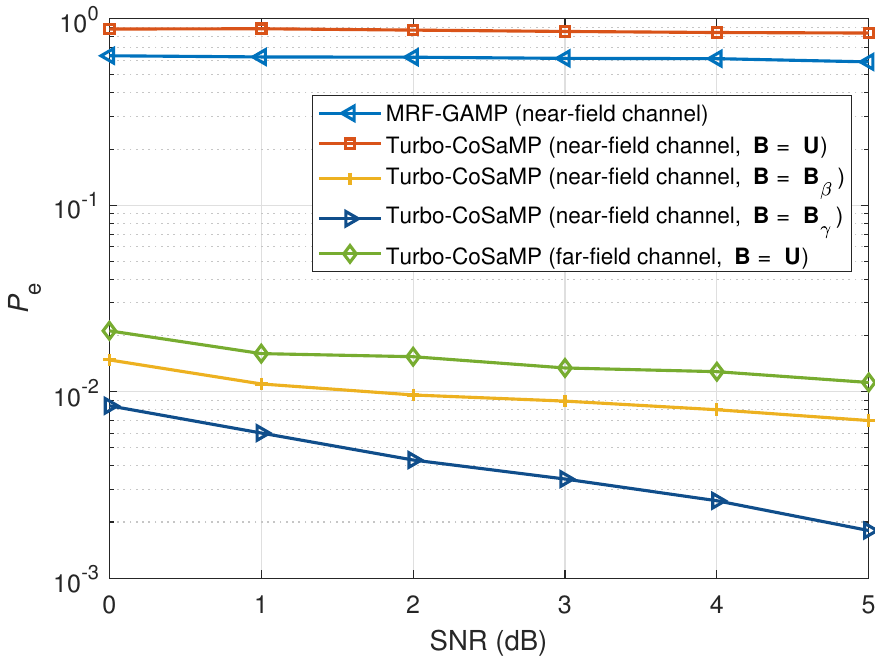}
	\caption{Detection error performance of various channel sampling algorithms versus SNR with $ K_{\mathrm{a}} = 100 $, $ M = 128 $, and $ N = 50 $.}
	\label{sampcomp}
\end{figure}

Under the near-field channel settings, we investigate how these methods would affect the AD performance $ P_{\mathrm{e}} $ within one transmission slot using the proposed Turbo-CoSaMP algorithm. We also present the generalized AMP algorithm with a Markov random field support structure (MRF-GAMP)~\cite{XWA22} tailored for the angular domain CE. For a fair comparison, we employ $ \mathbf{A} $ as an i.i.d. Gaussian matrix for the AMP-based algorithm. The detection error probabilities with $ N = 50 $ are shown in Fig.~\ref{sampcomp}. The failure in AD of the matching-pursuit-based methods suggests that the near-field channel feature cannot be fully captured by the $ L $ dominant coefficients of its angular domain representation due to the severe energy spread. As the density of nonzero elements significantly increases in the channel vector obtained through the spatial Fourier transform, the MRF-GAMP algorithm, which exploits the angular domain structure, performs inadequately under near-field conditions with limited measurements. Moreover, the proposed polar domain sampling method demonstrates superior performance compared to the one investigated in~\cite{CD22}, given a similar number of channel samples. This can be attributed to the lower column coherence exhibited by the former channel dictionary, which enables precise identification of active elements through correlation calculation $ \mathbf{A}_{\mathcal{K}_{\mathrm{r}}}^{H} \mathbf{RB}^{*} $ in CoSaMP, resulting in enhanced accuracy in AD.

We also present in Fig.~\ref{sampcomp} investigating Turbo-CoSaMP working in the far-field region with the angular domain sampling. It is found that the near-field-suited version exploiting polar domain sparsity surpasses the counterpart considering the angular domain sparsity in the far-field. This can be explained by re-examining the equivalent SMV model that the sensing matrix $ \boldsymbol{A} = \mathbf{B} \otimes \mathbf{A} $ offers $ 2^{J} P $ measurements for the former while $ 2^{J} M < 2^{J} P $ for the latter.

\subsection{Performance of Proposed CS algorithms}

\begin{figure*}[t]
	\centering
	\subfigure[]{\includegraphics[width=5cm]{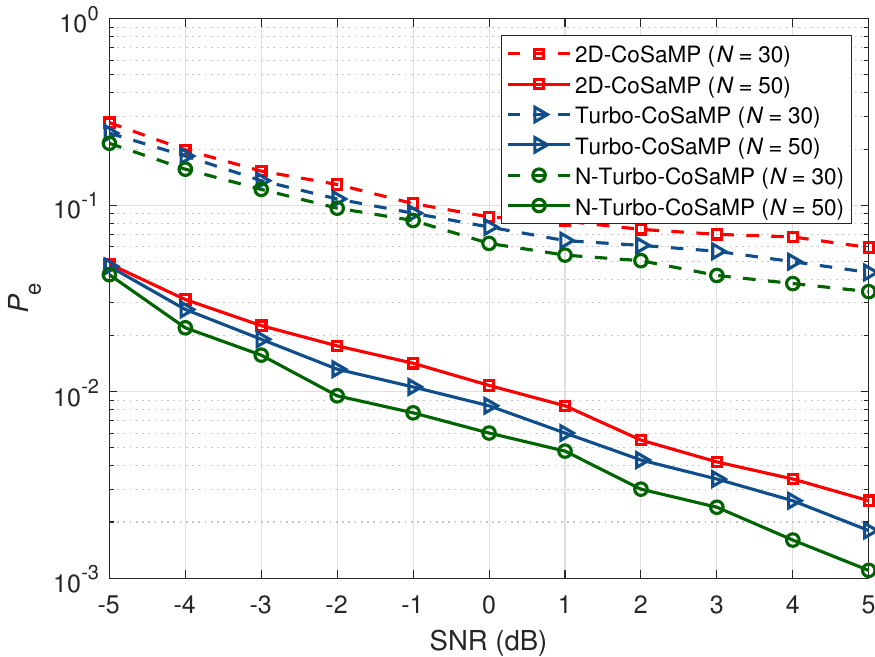} \label{cscompad}} \hspace{0.2cm}
	\subfigure[]{\includegraphics[width=5cm]{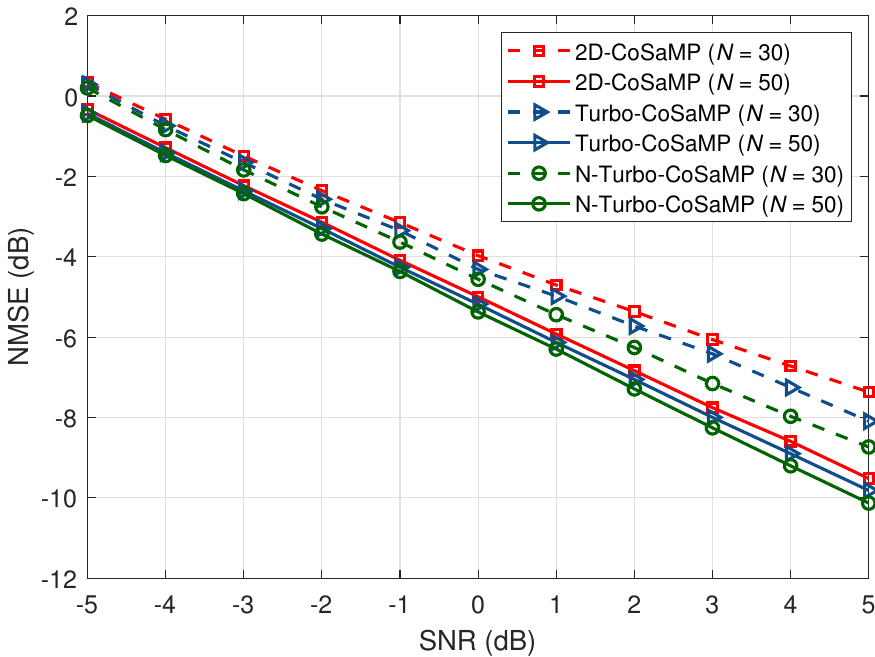} \label{cscompce}} \hspace{0.2cm}
	\subfigure[]{\includegraphics[width=5cm]{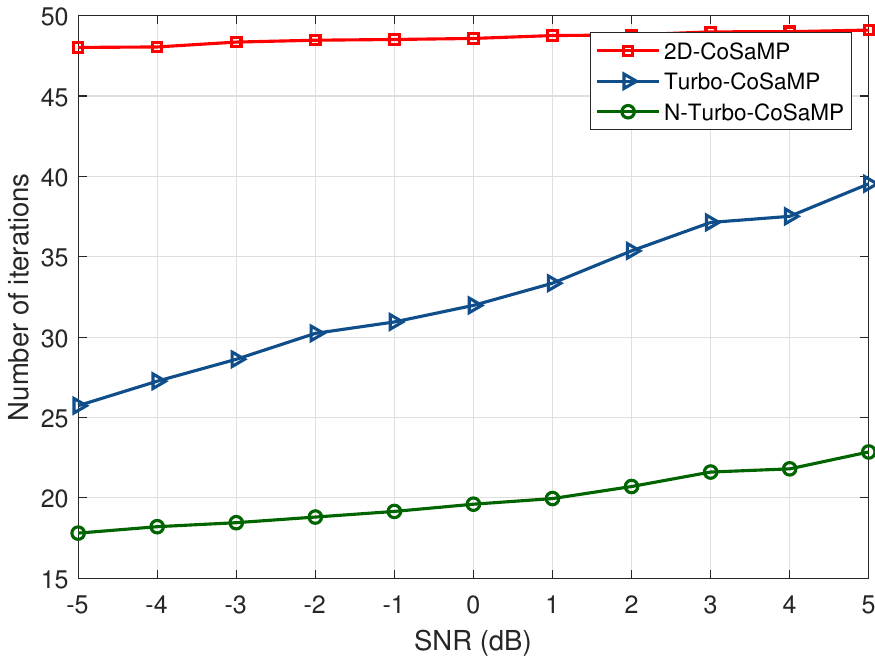} \label{cscompit}}
	\caption{JADCE performance of proposed algorithms with $ K_{\mathrm{a}} = 100 $ and $ M = 128 $. (a) Detection error rate versus SNR under different values of $ N $; (b) NMSE versus SNR under different values of $ N $; (c) Average number of iterations until convergence versus SNR with $ N = 50 $.}
	\label{cscomp}
\end{figure*}

We first examine the JADCE performance of the proposed CS algorithms within each transmission slot. The AD and CE qualities of Turbo-CoSaMP and N-Turbo-CoSaMP under different values of block length and SNR are shown in Fig.~\ref{cscompad} and Fig.~\ref{cscompce}, respectively, where we also operate 2D-CoSaMP on the JADCE model~\eqref{mtxsig} to form a benchmark. It can be seen that the error measures decrease when the length of the coherence block increases. In fact, the increase in the number of measurements reduces the value of the isometry constant of the measurement matrices, leading to a more accurate signal proxy by calculating the correlation. Note that the proposed methods perform effectively in the region where the block length is less than the number of active users, which is the advantage brought by leveraging the polar domain sparsity as fewer channel coefficients are required for reconstruction. Also, we observe that the JADCE performance improves when the value of SNR increases. This finding can be supported by the description in Remark 1 that the NMSE of the signal estimation is proportional to the noise variance. Furthermore, Turbo-CoSaMP outperforms 2D-CoSaMP since S-CoSaMP can confine possible locations of the active elements by considering the spatial domain row sparsity in the first place. Besides, the off-grid N-Turbo-CoSaMP algorithm restrains the basis mismatch influence of the grid-based searching, which leads to a performance gain compared to the on-grid Turbo-CoSaMP. Finally, the average number of iterations until convergence for both algorithms is presented in Fig.~\ref{cscompit}. It can be inferred that the convergence rates of the proposed methods are remarkably faster than OMP-based algorithms which need at least $ R $ iterations to recover $ R $ elements. Also, the off-grid algorithm, accelerated by the Newtonized coordinate descent, pays fewer rounds of iterations to converge than the on-grid one.

\begin{figure*}[t]
	\centering
	\begin{minipage}[b]{5cm}
		\subfigure[$ \mathrm{SNR} = 0 $ dB]{\includegraphics[width=5cm]{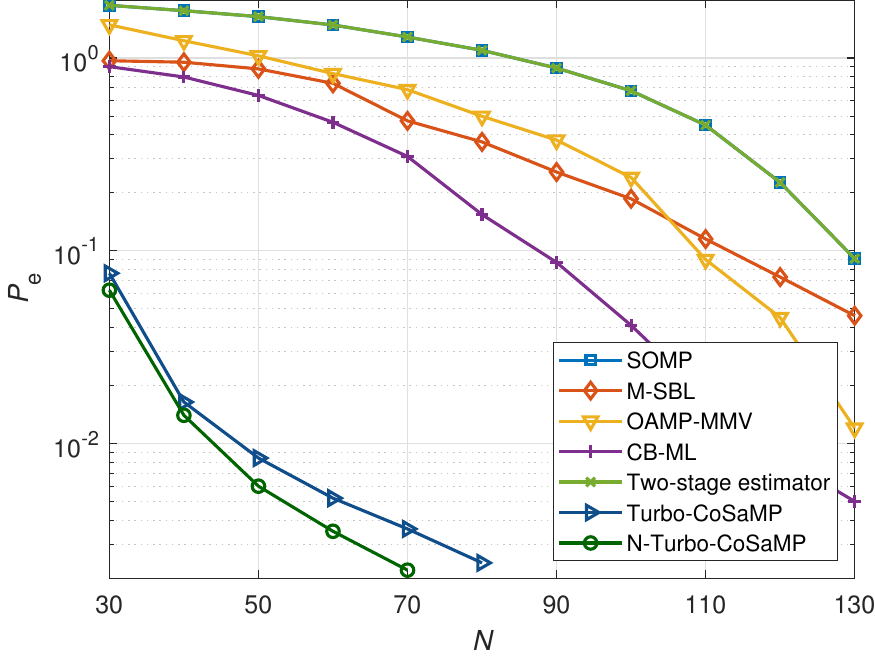} \label{adcomp}} \\
		\subfigure[$ \mathrm{SNR} = 0 $ dB]{\includegraphics[width=5cm]{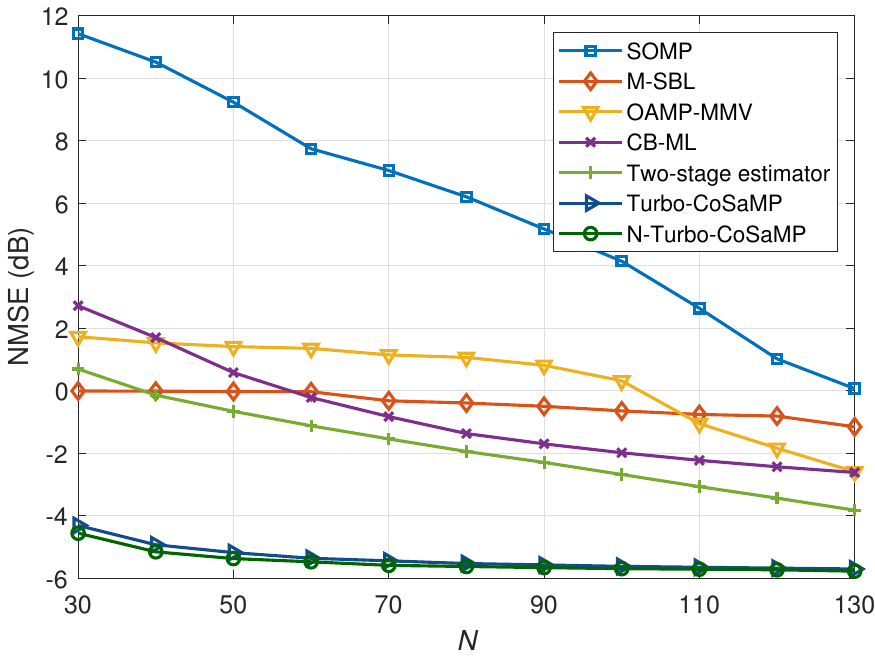} \label{cecomp}}
	\end{minipage} \hspace{0.2cm}
	\begin{minipage}[b]{5cm}
		\subfigure[$ N = 50 $]{\includegraphics[width=5cm]{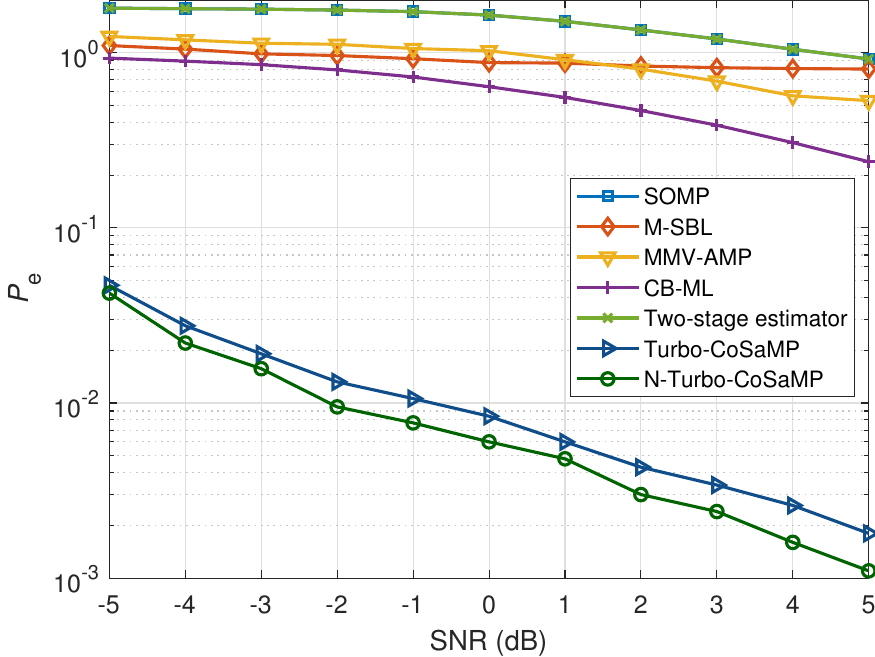} \label{adcomp1}} \\
		\subfigure[$ N = 50 $]{\includegraphics[width=5cm]{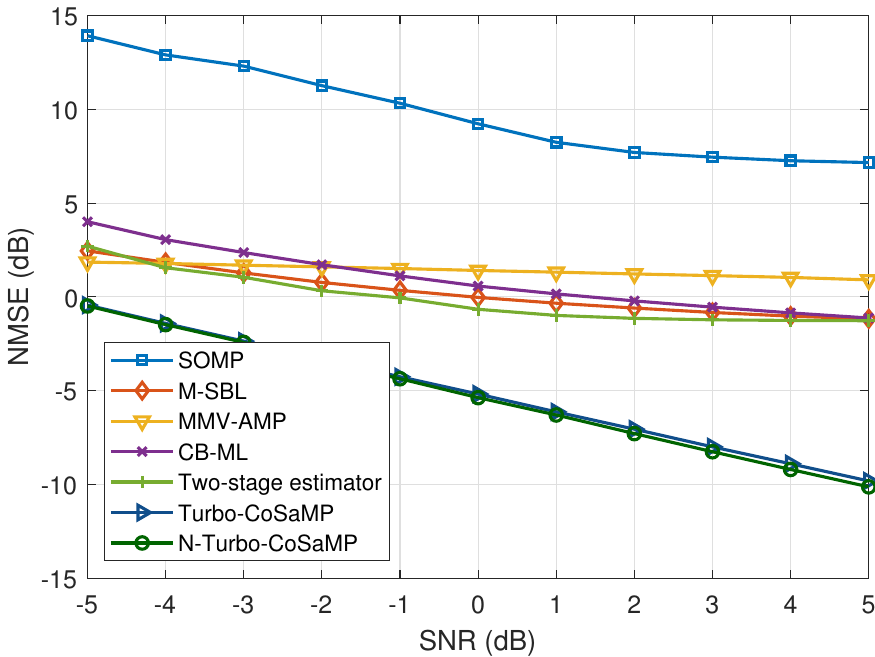} \label{cecomp1}}
	\end{minipage} \hspace{0.2cm}
	\begin{minipage}[b]{5cm}
		\subfigure[$ N = 100 $]{\includegraphics[width=5cm]{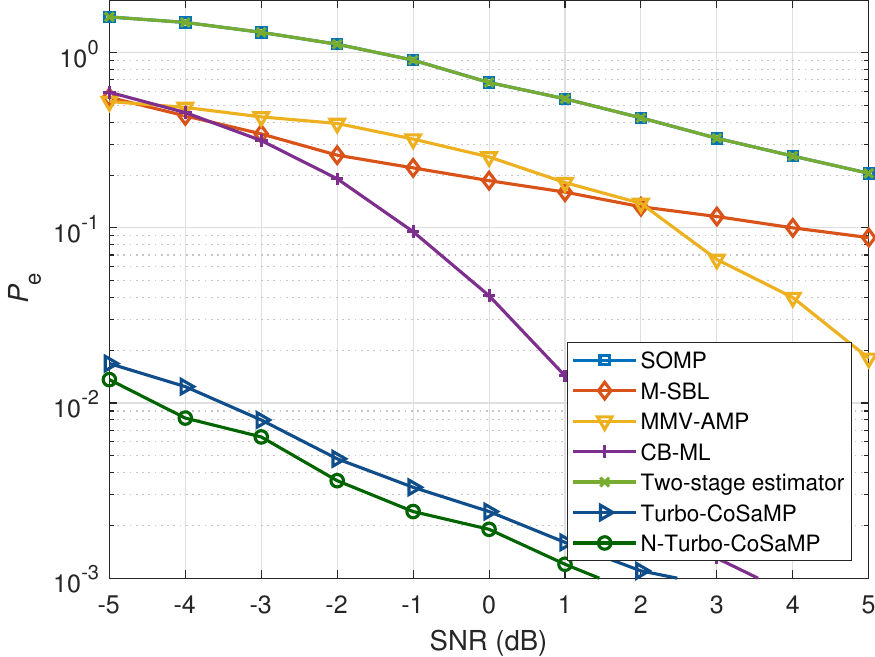} \label{adcomp2}} \\
		\subfigure[$ N = 100 $]{\includegraphics[width=5cm]{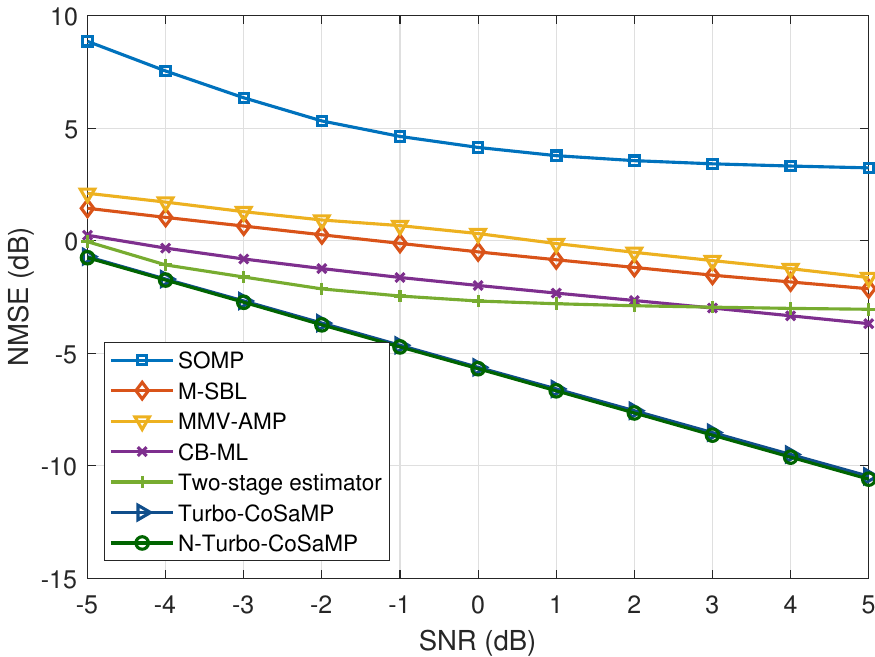} \label{cecomp2}}
	\end{minipage}
	\caption{JADCE performance of various algorithms with $ K_{\mathrm{a}} = 100 $, $ M = 128 $; (a) Detection error rate versus block length $ N $ when $ \mathrm{SNR} = 0 $ dB; (b) NMSE versus block length $ N $ when $ \mathrm{SNR} = 0 $ dB; (c) Detection error rate versus SNR when $ N = 50 $; (d) NMSE versus SNR when $ N = 50 $; (e) Detection error rate versus SNR when $ N = 100 $; (d) NMSE versus SNR when $ N = 100 $.}
	\label{jadcecomp}
\end{figure*}

In Fig.~\ref{jadcecomp}, we compare the JADCE performance of the proposed algorithms with various sparse recovery algorithms, including S-OMP~\cite{TGS06}, multiple sparse Bayesian learning (M-SBL)~\cite{WR07}, and the covariance-based maximum likelihood estimator (CB-ML)~\cite{FHJ21}. In addition, we apply OAMP-MMV~\cite{CLP21} for comparison, considering that it is more robust to potentially correlated or ill-conditioned sensing matrices. We further develop a two-stage estimator, which first recovers the spatial domain channel $ \widehat{\mathbf{z}}_{k, :}^{T} $ by S-OMP and then reconstructs each polar domain channel $ \widehat{\mathbf{x}}_{k} $ from $ \widehat{\mathbf{z}}_{k, :}^{T} = \mathbf{B} \widehat{\mathbf{x}}_{k} $ using OMP. It is seen in Fig.~\ref{adcomp} that compared to the proposed sparsity-exploiting algorithms that apply to the scenarios where $ N \ll K_{\mathrm{a}} $, S-OMP, M-SBL, and OAMP-MMV are regulated to operate in the region where $ N \gtrsim K_{\mathrm{a}} $ due to the fundamental limit of MMV approaches that the number of successfully retrieved active users grows with a scale of $ \mathcal{O}( N / \log ( 2^{J} ) ) $. Furthermore, we investigate the AD performance of these algorithms varying from different values of SNR under short and long block lengths in Fig.~\ref{adcomp1} and Fig.~\ref{adcomp2}, respectively. It is also shown that the proposed Turbo-CoSaMP and N-Turbo-CoSaMP algorithms outperform the MMV methods on both circumstances, which attributes to the exploitation of the spare structure of the near-field channel in the polar domain. On the other hand, although~\cite{FHJ21} derives that CB-ML yields a better scaling law for AD than that of AMP, it misbehaves in the considered near-field multipath channels since it is developed upon an independent channel model with no antenna correlation. Finally, by comparing the two-stage estimator with S-OMP in Fig.~\ref{cecomp}, Fig.~\ref{cecomp1} and Fig.~\ref{cecomp2}, we find that capturing the prominent elements in the sparse polar domain channel significantly improves the CE performance. Even so, it pays a cost of $ 100 $ block length to achieve similar results of Turbo-CoSaMP.

\subsection{Performance of Uncoupled URA Scheme for Near-Field Users}

In the designed uncoupled URA scheme, each active user arranges to send a $ B = 96 $-bit message; it is split into $ S = 7 $ sub-messages of length $ J = 14 $. The corresponding UCS schemes working in the near-field with polar domain sampling are compared in Fig.~\ref{uracomp} under different choices of SNR and block length. We observe that UCS with the N-Turbo-CoSaMP-based inner decoder outperforms that with Turbo-CoSaMP, which is directly attributed to the improvement in JADCE as revealed in Fig.~\ref{cscomp}. Moreover, we employ Turbo-CoSaMP with angular domain sampling (i.e., letting $ \mathbf{B} $ be a DFT matrix) as the CS decoder in UCS under both near-field and far-field conditions. The error performance of the considered UCS schemes is also given in Fig.~\ref{uracomp}. As indicated in Section VI-A, the overcomplete matrix for polar domain sampling offers more measurements to improve the JADCE performance, leading to our finding that the designed UCS scheme for near-field communications surpasses its counterpart in the far-field.

\begin{figure}[t]
	\centering
	\includegraphics[width=7cm]{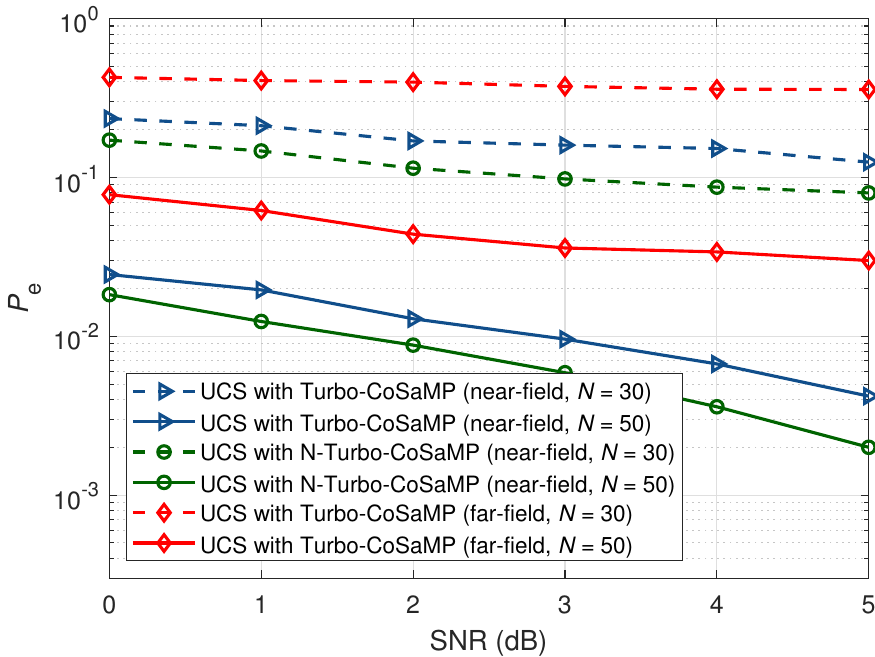}
	\caption{Decoding performance of various URA schemes under different values of $ N $ with $ K_{\mathrm{a}} = 100 $ and $ M = 128 $.}
	\label{uracomp}
\end{figure}

We consider the following multi-slot MIMO URA schemes for comparison. These schemes adopt the common codebook of size $ 50 \times 2^{14} $ and the same channel setting described before.
\begin{itemize}
	\item
	\textbf{Scheme 1:} This is the CCS-based approach in~\cite{FHJ21} with a CB-ML-cored inner decoder for AD. We divide the message into $ S_{\mathrm{ccs}} = 25 $ segments of length $ J = 14 $ with the profile $ [ 14, 4, \dots, 4, 0, 0, 0 ] $. To fill each segment, we append the parity check bits produced by the pseudo-random linear combinations of the data bits from the previous segments.
	\item
	\textbf{Scheme 2:} The massive MIMO URA scheme in~\cite{LW21} where S-OMP acts as the CS decoder. A similar two-stage estimator is applied in~\cite{LW21} to enhance the CE. We replace the corresponding channel compression dictionary with the polar domain sampling matrix $ \mathbf{B} $ to suit the near-field case.
	\item
	\textbf{Scheme 3:} The UCS scheme investigated in~\cite{XWA22} which leverages the angular domain sparsity for JADCE and CSI-aided clustering decoding. We set $ \mathbf{A} $ as an i.i.d. Gaussian matrix here for a fair comparison.
\end{itemize}
The detection error rates of these approaches are provided in Fig.~\ref{ucscomp}. The proposed UCS scheme achieves the lowest error rate due to the superiority of the proposed CoSaMP-based algorithms leveraging polar domain sparsity, with the evidence revealed in Fig.~\ref{jadcecomp}. The URA scheme designed in~\cite{FHJ21} operates with the consideration of an i.i.d. Rayleigh fading massive MIMO channel model and the approach in~\cite{XWA22} relies highly on the angular domain sparsity in the far-field situation. They both have certain limitations when adapting to near-field communications with antenna-correlated channels and severe energy spread encountered in the angular domain transformation. Although the sparse channel representation is taken for CE enhancement in~\cite{LW21}, AD is still formulated as a MMV problem, asking for a considerably large block length. Furthermore, the spectral efficiency of the proposed uncoupled transmission scheme is $ \frac{B K_{\mathrm{a}}}{S N} = 28 $ bit/channel use and it also outperforms that of the CCS-based scheme, which is $ 9.3 $ bit/channel use.

\begin{figure*}[t]
	\centering
	\subfigure[]{\includegraphics[width=7cm]{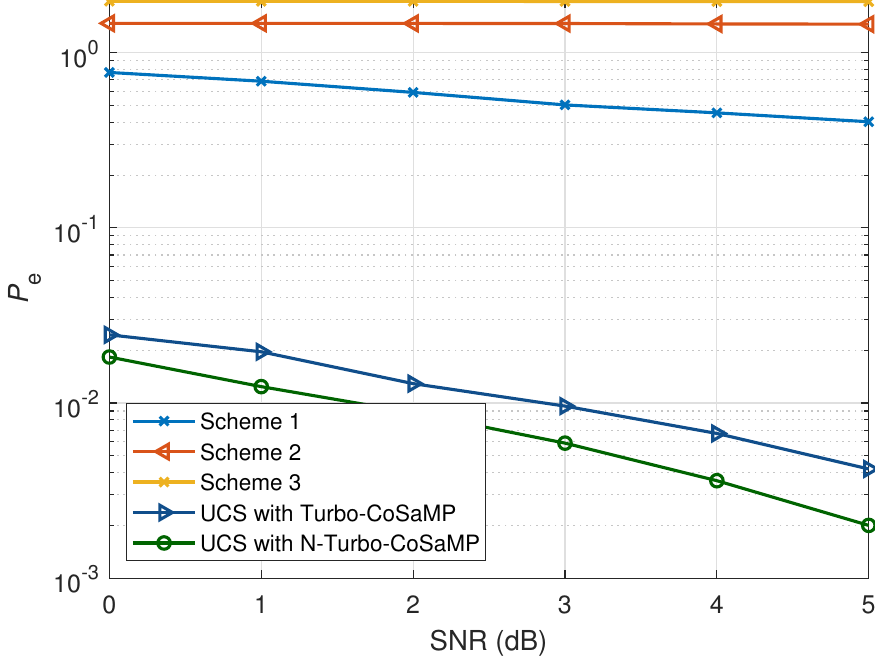} \label{ucscomp}} \hspace{0.5cm}
	\subfigure[]{\includegraphics[width=7cm]{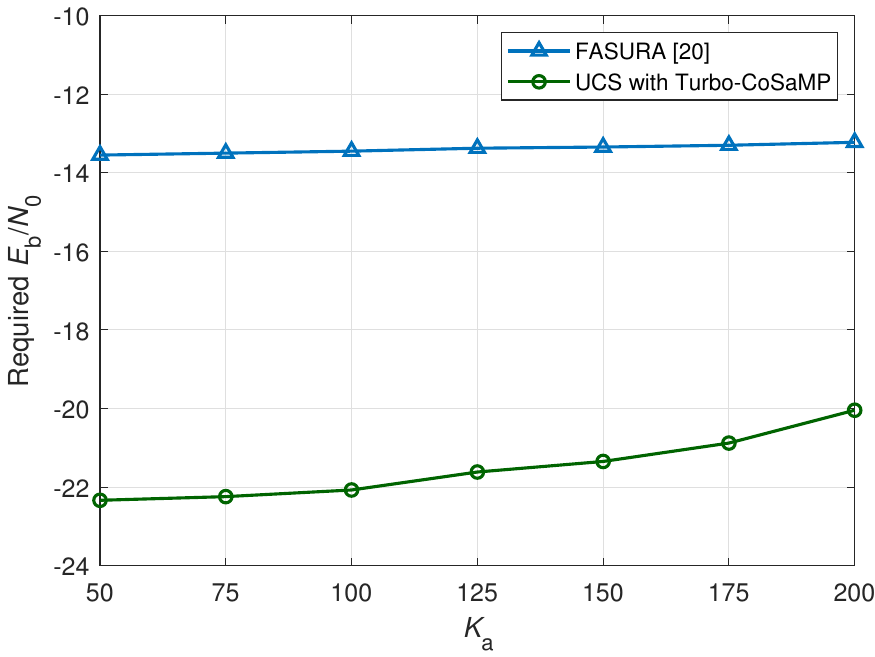} \label{fasuracomp}}
	\caption{Decoding performance of various URA schemes; (a) Decoding error of multi-slot schemes as a function of SNR with $ K_{\mathrm{a}} = 100 $, $ M = 128 $, and $ N = 50 $; (b) Required energy-per-bit to achieve $ P_{\mathrm{e}} < 0.05 $ of UCS and FASURA with $ M = 64 $ and $ N_{\mathrm{tot}} = 3199 $.}
\end{figure*}

Among those tensor-based and pilot-based MIMO URA schemes, we adopt fading spread unsourced random access (FASURA)~\cite{GNC22}, which possesses the best performance, for comparison. For FASURA, the first $ 16 $ bits of the $ 98 $-bit message are used to select a pilot of length $ 895 $ and a spreading sequence of size $ 9 $. The rest $ 82 $ bits are encoded by a cyclic redundancy check (CRC) code and then a polar code. The resultant sequence of length $ 512 $ is modulated to quadrature phase shift keying (QPSK) symbols and spread by the chosen spreading sequences. The total block length is $ N_{\mathrm{tot}} = 895 + \frac{512}{\log_{2} 4} \times 9 = 3199 $. For the proposed UCS with Turbo-CoSaMP, we set $ B = 98 $, $ S = 7 $, $ J = 14 $, and $ N = 457 $, such that $ N_{\mathrm{tot}} = 3199 $ as well. The antenna number is $ M = 64 $, in which case the users within the distance range of $ (10 \ \text{m}, 20 \ \text{m}) $ are still under the near-field condition. Fig.~\ref{fasuracomp} shows the required energy-per-bit $ E_{\mathrm{b}} / N_{0} \triangleq \frac{S \| \mathbf{a}_{j} \|_{2}^{2}}{B \sigma^{2}} $ to achieve the target $ P_{\mathrm{e}} < 0.05 $ of both the approaches. The designed UCS scheme with codeword collision resolution is observed to have a $ 6 \sim 8 $ dB gain when $ K_{\mathrm{a}} $ varies from $ 50 $ to $ 200 $. The energy efficiency of UCS is attributed to its decoding strategy relying on channel reconstruction only. As for FASURA, the polar decoding requires not only the precise JADCE results at the first stage, but also considerable transmit power against the noise and, most importantly, the multi-user interference. 

\subsection{Evaluation of Codeoword Collision Resolution}

\begin{figure*}[t]
	\centering
	\subfigure[]{\includegraphics[width=7cm]{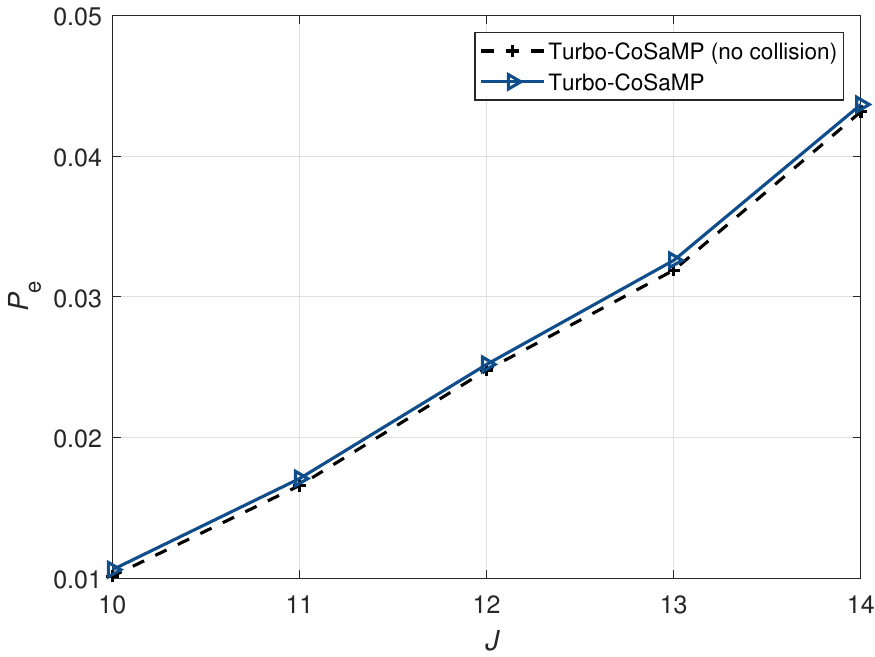} \label{cosampj}} \hspace{0.5cm}
	\subfigure[]{\includegraphics[width=7cm]{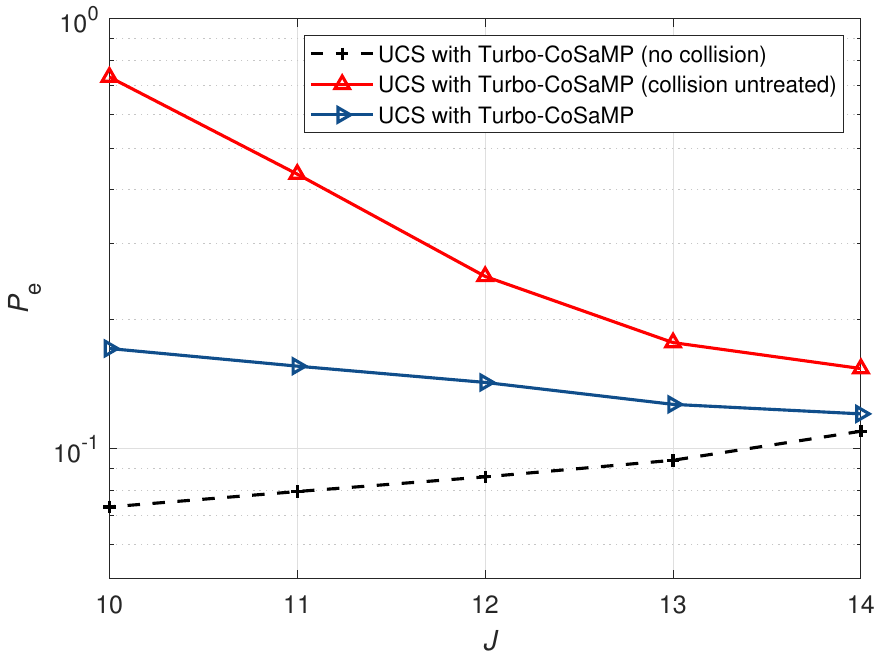} \label{ucsj}}
	\caption{Detection performance against codebook size with $ K_{\mathrm{a}} = 100 $, $ M = 128 $, $ N = 30 $, and $ \mathrm{SNR} = 5 $ dB. (a) AD error rate of Turbo-CoSaMP versus $ J $; (b) Decoding error rate of UCS with Turbo-CoSaMP  versus $ J $.}
	\label{j}
\end{figure*}

In the following simulations, we fix $ K_{\mathrm{a}} = 100 $, $ M = 128 $, $ N = 30 $, and $ \mathrm{SNR} = 5 $ dB. For URA, data transmission takes $ S = 7 $ slots regardless of the codebook size. We set a benchmark by assuming no codeword reuses in each slot. The case where the collisions in clustering decoding are untreated is also considered for comparison. Fig.~\ref{cosampj} shows the detection error of Turbo-CoSaMP against the codebook dimension $ 2^{J} $. As can be seen, the proposed algorithm operates well even under a high chance of collision. In fact, the equivalent channel related to the reused codeword is the sum of conflicting users' channels and usually has relatively large energy. Therefore, codeword collision has limited impacts on the threshold-based AD criterion. It is intriguing to observe that the error rate of AD decreases as the codebook size reduces. This phenomenon can be explained by referring to the RIP provided in Appendix A. In fact, when considering a fixed number of measurements, an increase in the value of $ J $ results in a larger value of $ \delta_{r} $, which subsequently impacts the activity identification step in CoSaMP.

The decoding error of UCS with Turbo-CoSaMP against $ J $ is presented in Fig.~\ref{ucsj}. It can be observed that the event of codeword reuse leads to a significant performance deterioration if not treated. The proposed collision resolution strategy is shown to be effective for reassigning reused codewords to the original sequences, which is another advantage in contrast to the URA schemes designed in~\cite{SBM21,LW21,FMJ22,GNC22} that treat collisions as error events or assume no collision. Furthermore, despite the advantage of a small-sized codebook in AD, the decoding success rate for URA decreases due to unmanageable collisions to a great extent. This is precisely why we opt for $ J = 14 $ in the above simulations.

\section{Conclusion}

In this paper, we presented an URA scheme with massive MIMO to support the massive access of machine-type users in near-field communications. We employed the UCS framework for uplink data transmission using multiple time slots and took two steps to design the decoder. First of all, we proposed an angle-distance sampling method to promote the sparsity of near-field channels in the polar domain. Exploiting both the codeword and the channel sparsity, we put forward the Turbo-CoSaMP algorithm for joint codeword AD and CE at every transmission interval. A Newtonized refinement was further developed within Turbo-CoSaMP to enhance the JADCE performance by harnessing sparsity over the continuum. Afterward, we stitched the slot-distributed codewords in sequence by a clustering decoding process based on a modified $ K $-medoids method. Simulation results have demonstrated that by exploiting the channel sparsity, the proposed sparse recovery methods can achieve low detection error rates even under the circumstance that the available block length is less than the number of active users. On this basis, the corresponding uncoupled URA scheme for near-field communications has proven to be reliable for achieving high spectral efficiency.

\appendices
\section{}

Consider the near-field sampling matrix $ \mathbf{B} $ introduced in Section III, we rewrite \eqref{rip} as
\begin{align}
	\left( 1 - \delta_{r} \right) \| \mathbf{X} \|_{F}^{2} \leqslant \left\| \mathbf{A} \begin{bmatrix} \mathbf{X}_{1} \ldots \mathbf{X}_{P_{\phi}} \end{bmatrix} {\scriptsize \begin{bmatrix} \mathbf{B}_{1}^{T} \\ \vdots \\ \mathbf{B}_{P_{\phi}}^{T} \end{bmatrix}} \right\|_{F}^{2} \leqslant \left( 1 + \delta_{r} \right) \| \mathbf{X} \|_{F}^{2}, \label{rip2}
\end{align}
where $ \mathbf{X}_{p'} $ is a submatrix of $ \mathbf{X} $ constructed by selecting columns with indices from $ ( p' -1 ) P_{\theta} + 1 $ to $ p' P_{\theta} $. We assume that for $ \mathbf{A} $ and each submatrix $ \mathbf{B}_{p'} $ of $ \mathbf{B} $, it holds that
\begin{align}
	C_{p'} ( 1 - \delta_{r_{p'}} ) \| \mathbf{X}_{p'} \|_{F}^{2} \leqslant \| \mathbf{A} \mathbf{X}_{p'} \mathbf{B}_{p'}^{T} \|_{F}^{2} \leqslant C_{p'} ( 1 + \delta_{r_{p'}} ) \| \mathbf{X}_{p'} \|_{F}^{2}.
\end{align}
for some number $ 0 < C_{p'} < 1 $ and $ r_{p'} < r $. Then, \eqref{rip2} is met when $ \sum_{p'} C_{p'} = 1 $ and $ \sum_{p'} r_{p'} < r $. To the worst case, we have that $ C_{p'} = 1 $ and $ \mathbf{X}_{p'} $ is $ r $-sparse. Then, \eqref{rip2} is satisfied if
\begin{align}
	\left( 1 - \delta_{r} \right) \| \mathbf{X}_{p'} \|_{F}^{2} \leqslant \| \mathbf{A} \mathbf{X}_{p'} \mathbf{B}_{r_{p'}}^{T} \|_{F}^{2} \leqslant \left( 1 + \delta_{r} \right) \| \mathbf{X}_{p'} \|_{F}^{2}. \label{rip3}
\end{align}

In the following discussion, we choose $\mathbf{A}$ as a submatrix of a DFT matrix $ \bar{\mathbf{A}} $. As for $ \mathbf{B}_{p'} $, it can also be regarded as a submatrix of $ \bar{\mathbf{B}}_{p'} \in \mathbb{C}^{P_{\theta} \times P_{\theta}} $ whose $ ( p_{1}, p_{2} ) $-th element
\begin{align}
	[ \bar{\mathbf{B}}_{p'} ]_{p_{1}, p_{2}} = \tfrac{1}{\sqrt{P_{\theta}}} \exp \left( j \tfrac{2 \pi}{\lambda_{c}} ( \tilde{d}_{p_{2}, p'} - d_{p_{2}, p', p_{1}} ) \right),
\end{align}
where $ d_{p_{2}, p', p_{1}} = \sqrt{\tilde{d}_{p_{2}, p'}^{2} - \tilde{d}_{p_{2}, p'} \tilde{\theta}_{p_{2}} \kappa_{p_{1}} \lambda_{c}  + \kappa_{p_{1}}^{2} ( \frac{\lambda_{c}}{2} )^{2}} $ and $ \kappa_{p_{1}} = \frac{2 p_{1} - P_{\theta} - 1}{2} $. Denote $ [ \bar{\mathbf{B}}_{p'} ]_{:, p_{2}} $ the $ p_{2} $-th column of $ \bar{\mathbf{B}}_{p'} $. It can be derived that
\small
\begin{align}
	\left( [ \bar{\mathbf{B}}_{p'} ]_{:, p'_{2}} \right)^{H} [ \bar{\mathbf{B}}_{p'} ]_{:, p_{2}} &= \left| \frac{1}{P_{\theta}} \sum_{p_{1} = -( P_{\theta} - 1 ) / 2}^{( P_{\theta} - 1 ) / 2} \exp \left( - j p_{1} \pi (\tilde{\theta}_{p_{2}} - \tilde{\theta}_{p'_{2}}) + j \tfrac{\lambda_{c}}{4} p_{1}^{2} \pi \left( \tfrac{1 - \tilde{\theta}_{p_{2}}^{2}}{\tilde{d}_{p_{2}, p'}} - \tfrac{1 - \tilde{\theta}_{p'_{2}}^{2}}{\tilde{d}_{p'_{2}, p'}} \right) \right) \right| \notag \\
	&\overset{( \mathrm{b} )}{=} \left| \frac{1}{P_{\theta}} \sum_{p_{1} = -( P_{\theta} - 1 ) / 2}^{( P_{\theta} - 1 ) / 2} \exp \left( - j p_{1} \pi (\tilde{\theta}_{p_{2}} - \tilde{\theta}_{p'_{2}}) \right) \right| \notag \\
	&= \left| \frac{\sin \left({\frac{1}{2} P_{\theta} \pi ( \tilde{\theta}_{p_{2}} - \tilde{\theta}_{p'_{2}} )}\right)}{P_{\theta} \sin \left({\frac{1}{2} \pi ( \tilde{\theta}_{p_{2}} - \tilde{\theta}_{p'_{2}} )} \right)} \right| = \begin{cases}
		1, & p_{2} = p'_{2}, \\
		0, & p_{2} \neq p'_{2},
	\end{cases}
\end{align}
\normalsize
where $ ( \mathrm{b} ) $ stands since $ ( 1 - \tilde{\theta}_{p_{2}}^{2} ) / \tilde{d}_{p_{2}, p'} = ( 1 - \tilde{\theta}_{p'_{2}}^{2} ) / \tilde{d}_{p'_{2}, p'} = 1 / \tilde{\phi}_{p'} $. This suggests that $ \bar{\mathbf{B}}_{p'} $ is actually a unitary matrix. Therefore, given the observation $ \mathbf{Y}_{0} = \bar{\mathbf{A}} \mathbf{X}_{0} \bar{\mathbf{B}}_{p'}^{T} $, $ \widehat{\mathbf{X}}_{0} = \bar{\mathbf{A}}^{H} \mathbf{Y}_{0} \bar{\mathbf{B}}_{p'}^{*} $ is the exact estimate of the signal $ \mathbf{X}_{0} $.

Based on the properties of $ \bar{\mathbf{A}} $ and $ \bar{\mathbf{B}}_{p'} $, we are able to present the following corollary measuring how sensing matrices $ \mathbf{A} $ and $ \mathbf{B} $ behave.
\begin{corollary}[Theorem 3.2 in \cite{RV06}]
	Consider the RIP described in \eqref{rip3}, the restricted isometry constant $ \delta_{r} $ of $ \mathbf{A} $ and $ \mathbf{B} $ satisfied $ \mathbb{E}\{ \delta_{r} \} < \varepsilon $ whenever
	\begin{align}
		N M > \dfrac{r \log_{e} 2^{J} P_{\theta}}{\varepsilon^{2}} \log_{e} \left( \dfrac{r \log_{e} 2^{J} P_{\theta}}{\varepsilon^{2}} \right) \log_{e}^{2} r. \label{coro1}
	\end{align}
\end{corollary}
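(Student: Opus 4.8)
The plan is to recast the two-sided bound \eqref{rip3} as the restricted isometry property of a single random partial-unitary matrix and then invoke Theorem~3.2 of \cite{RV06} off the shelf. First I would vectorize: since $\operatorname{vec}( \mathbf{A} \mathbf{X}_{p'} \mathbf{B}_{p'}^{T} ) = ( \mathbf{B}_{p'} \otimes \mathbf{A} ) \operatorname{vec}( \mathbf{X}_{p'} )$, and both the Frobenius norm and the property of being $r$-sparse are preserved by $\operatorname{vec}(\cdot)$, the restricted isometry constant of the pair $( \mathbf{A}, \mathbf{B}_{p'} )$ in the sense of \eqref{rip3} is identical to the order-$r$ restricted isometry constant of the matrix $\mathbf{\Phi} \triangleq \mathbf{B}_{p'} \otimes \mathbf{A} \in \mathbb{C}^{N M \times 2^{J} P_{\theta}}$ acting on $2^{J} P_{\theta}$-dimensional $r$-sparse vectors. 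It therefore suffices to control $\delta_{r}( \mathbf{\Phi} )$.

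Next I would identify $\mathbf{\Phi}$ as a random row selection from a flat unitary system. The codebook $\mathbf{A}$ retains $N$ rows of the $2^{J} \times 2^{J}$ DFT matrix $\bar{\mathbf{A}}$, whose entries all have modulus $1 / \sqrt{2^{J}}$; the preceding computation in this appendix already established that $\bar{\mathbf{B}}_{p'}$ is unitary with entries of modulus $1 / \sqrt{P_{\theta}}$ and that $\mathbf{B}_{p'}$ keeps $M$ of its rows. Hence $\bar{\mathbf{B}}_{p'} \otimes \bar{\mathbf{A}}$ is a $2^{J} P_{\theta} \times 2^{J} P_{\theta}$ unitary matrix every entry of which has modulus $1 / \sqrt{2^{J} P_{\theta}}$, so (after the usual $\sqrt{2^{J} P_{\theta}}$ rescaling) its rows form a bounded orthonormal system with incoherence constant $K = 1$, and $\mathbf{\Phi}$ is the submatrix of $N M$ of those rows. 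With $K = 1$, ambient dimension $n = 2^{J} P_{\theta}$, sparsity $r$, and number of measurements $N M$, Theorem~3.2 of \cite{RV06} yields $\mathbb{E}\{ \delta_{r} \} < \varepsilon$ whenever $N M \gtrsim \varepsilon^{-2} r ( \log_{e} n ) \log_{e}( \varepsilon^{-2} r \log_{e} n ) \log_{e}^{2} r$, which is exactly \eqref{coro1} once $n = 2^{J} P_{\theta}$ is substituted.

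The step I expect to be the main obstacle is making the ``random row selection'' claim fully rigorous. The rows retained by $\mathbf{\Phi} = \mathbf{B}_{p'} \otimes \mathbf{A}$ are indexed by the product set $\mathcal{I}_{B} \times \mathcal{I}_{A}$, where $\mathcal{I}_{A}$ and $\mathcal{I}_{B}$ are the index sets selected for $\mathbf{A}$ and $\mathbf{B}_{p'}$; this is a structured, rather than an unstructured i.i.d.\ uniform, subset of the $2^{J} P_{\theta}$ rows of $\bar{\mathbf{B}}_{p'} \otimes \bar{\mathbf{A}}$. Strictly, one must therefore either appeal to a Kronecker-structured refinement of the uniform-uncertainty-principle bound of \cite{RV06}, or bound $\delta_{r}( \mathbf{B}_{p'} \otimes \mathbf{A} )$ through the restricted isometry constants of the two factors $\mathbf{A}$ and $\mathbf{B}_{p'}$ and then apply \cite{RV06} to each factor separately; the remaining ingredients --- the vectorization identity and the flat-entry verification for the Kronecker product --- are routine. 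Finally, the worst-case reduction carried out earlier in this appendix ($C_{p'} = 1$ and $\mathbf{X}_{p'}$ fully $r$-sparse) is what transfers the single-block estimate \eqref{rip3} back to the full dictionary \eqref{rip2}.
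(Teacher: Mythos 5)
Your proposal follows essentially the same route as the paper: establish that $\bar{\mathbf{B}}_{p'}$ is unitary with flat entries, view the pair $(\mathbf{A},\mathbf{B}_{p'})$ in vectorized/Kronecker form as a row-submatrix of a flat unitary system of dimension $2^{J}P_{\theta}$ with $NM$ measurements, and invoke Theorem~3.2 of \cite{RV06} to obtain exactly \eqref{coro1}. The subtlety you flag --- that the retained rows form a structured product set rather than a uniformly random subset, so \cite{RV06} does not apply verbatim --- is a gap the paper itself leaves unaddressed, so your treatment is, if anything, more candid than the original.
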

Corollary 1 indicates that increasing the number of measurements $ N $ and the number of antennas $ M $ allows for a lower threshold of the restricted isometry constant. However, we would like to mention that it is challenging to adopt \eqref{coro1} to determine the values of $ N $ and $ M $ based on a desired value of $ \varepsilon $ in the practical application of CS algorithms. There are two reasons:
\begin{enumerate}
	\item
	The RIP is limited to the assumption of exact reconstruction~\cite{RV06}, i.e., the reconstructed signal and the original one are expected to be equal to each other. However, slight deviations in estimation are allowed in the application of CS recovery.
	\item
	We consider the worst case that all the active elements are restrained in a small section of the signal, which is not a common circumstance in practice.
\end{enumerate}

\bibliographystyle{IEEEtran}
\bibliography{reference}

\end{document}